  \theoremstyle{definition}
  \newtheorem{defn}{\protect\definitionname}
  \theoremstyle{plain}
  \newtheorem{prop}{\protect\propositionname}
  \providecommand{\definitionname}{Definition}
  \providecommand{\propositionname}{Proposition}
\begin{document}

\title{Some properties of generalized Fisher information in the context
of nonextensive thermostatistics\footnote{This is a preprint version that differs from the published version, Physica A, vol. 392, issue. 15, pp. 3140-3154, doi:10.1016/j.physa.2013.03.062, in some improvements after refereeing, pagination and typographics details.}}

\author{J.-F. Bercher}

\ead{jf.bercher@esiee.fr}

\address{Laboratoire d'informatique Gaspard Monge, UMR 8049 \\ 
ESIEE-Paris, Université Paris-Est\\
 5 bd Descartes, 77454 Marne la Vallée Cedex 2, France\\
 tel: 33-1-45-92-65-15 fax: 33-1-45-92-66-99}
\begin{abstract}
We present two extended forms of Fisher information that fit well
in the context of nonextensive thermostatistics. We show that there exists
an interplay between these generalized Fisher information, the generalized
$q$-Gaussian distributions and the $q$-entropies. The minimum of
the generalized Fisher information among distributions with a fixed
moment, or with a fixed $q$-entropy is attained, in both cases, by
a generalized $q$-Gaussian distribution. This complements the fact
that the $q$-Gaussians maximize the $q$-entropies subject to a moment
constraint, and yields new variational characterizations of the generalized
$q$-Gaussians. We show that the generalized Fisher information naturally
pop up in the expression of the time derivative of the $q$-entropies,
for distributions satisfying a certain nonlinear heat equation. This
result includes as a particular case the classical de Bruijn identity.
Then we study further properties of the generalized Fisher information
and of their minimization. We show that, though non additive, the generalized Fisher information of a 
combined system is upper bounded. 
In the case of mixing, we show that the generalized Fisher information
is convex for $q\geq1.$ Finally, we show that the minimization of
the generalized Fisher information subject to moment constraints satisfies
a Legendre structure analog to the Legendre structure of thermodynamics. \end{abstract}
\begin{keyword}
Generalized Fisher information \sep Generalized Rényi and Tsallis
entropies \sep Generalized $q$-gaussian distributions \sep Information
theoretic inequalities

\PACS {02.50.-r} \sep {05.90.+m} \sep {89.70.+c} 

\end{keyword}
\maketitle
\date{Typesetted \today}

\section{Introduction and preliminary definitions}

Information measures are important both for the foundation of information
sciences and for practical applications of information processing.
They are of outmost importance in several areas of physics, beginning
of course with statistical physics. Entropies and Fisher information
have been identified as useful and versatile tools for characterizing
complex systems, see e.g. \cite{dehesa_Cramer-Rao_2006,dehesa_fisher_2006,dehesa_generalized_????}.
Classical information theoretic inequalities, as described for instance
in \cite{cover_elements_2006,dembo_information_1991}, interrelate
information measures. These inequalities have proved to be useful
for communication theoretic problems and engineering applications.
They are also connected to uncertainty relations in physics, see e.g.
\cite{stam_inequalities_1959,folland_uncertainty_1997,sen_statistical_2011}, 
and to functional inequalities in mathematics.
The aim of the paper is to present and study two generalized Fisher
information measures that fit well in the context of nonextensive thermostatistics,
and to describe the interplay between information measures and generalized
$q$-Gaussians, thus providing new characterizations of these generalized
$q$-Gaussians. We first recall the context, our notations and main
definitions.

\subsection*{Generalized entropies}

The Boltzmann entropy, linked to the Shannon entropy of information
theory, is at the heart of thermodynamics. For energy constraints,
it is well-known that the maximum entropy distribution is the Gaussian
distribution. For the analysis of complex systems, generalized entropies,
which reduces to the standard one as a particular case, have been
proposed. In particular, the nonextensive thermodynamics derived from Tsallis entropy,
 has received a high attention. There is a wide variety of applications where experiments, numerical
results and analytical derivations fairly agree with the new formalisms
\cite{tsallis_introduction_2009}. Some physical applications of
the generalized entropies, --\,\,including statistics of cosmic
rays, defect turbulence, optical lattices, systems with long-range
interactions, superstatistical systems, etc., can be found in the
recent review \cite{beck_generalised_2009} and references therein.
Let us recall here that if $x$ is a vector of $\Omega\subseteq\mathbb{R}^{n}$,
and $f(x)$ a probability density defined with respect to the Lebesgue
measure, then following \cite{golomb_information_1966}, the information
generating function is the quantity 
\begin{equation}
M_{q}[f]=\int_{\Omega}f(x)^{q}\mathrm{d}x,
\end{equation}
for $q\geq0$. This quantity is also sometimes called ``entropic
moment''. The Tsallis and Rényi entropies are respectively given by
\begin{eqnarray}
S_{q}[f] &=\frac{1}{1-q}\left(M_{q}[f]-1\right), \\
H_{q}[f] &=\frac{1}{1-q}\log M_{q}[f].
\end{eqnarray}
For our purposes, it is not always necessary to distinguish between
the two entropies that we will call collectively $q$-entropies. Finally,
we will also call ``entropy power'' of order $q$, or $q$-entropy
power the quantity 
\begin{equation}
N_{q}[f]=\exp\left(\frac{2}{n}H_{q}[f]\right)=M_{q}[f]^{\frac{2}{n}\,\frac{1}{1-q}}=\left(\int_{\Omega}f(x)^{q}\text{d}x\right)^{\frac{2}{n}\,\frac{1}{1-q}},\label{eq:defEntropyPower}
\end{equation}
for $q\neq 1$. For $q=1,$ we let $N_{q}[f]=\exp\left(\frac{2}{n}H_{1}[f]\right),$
where $H_{1}[f]$ is the Boltzmann-Shannon entropy.

\subsection*{Generalized $q$-Gaussian distributions}

In the context of nonextensive thermostatistics, the role devoted
to the standard Gaussian distribution is extended to generalized $q$-Gaussian
distributions, which include the standard Gaussian as a special case.
These generalized $q$-Gaussian distributions form a versatile family
that can describe problems with compact support as well as problems
with heavy tailed distributions. They are also analytical solutions
of actual physical problems, see e.g. \cite{lutz_anomalous_2003,schwaemmle_q-gaussians_2008,vignat_isdetection_2009}, and are sometimes known as Barenblatt-Pattle
functions, following their identification by \cite{barenblatt_unsteady_1952,pattle_diffusion_1959}.
We shall also mention that the generalized $q$-Gaussian distributions
appear in other fields, namely as the solution of non-linear diffusion
equations, or as the distributions that saturate some sharp inequalities
in functional analysis \cite{del_pino_best_2002,del_pino_optimal_2003,cordero-erausquin_mass-transportation_2004,agueh_sharp_2008}. 
\begin{defn}
Let $x$ be a random vector of $\mathbb{R}^{n},$ and let $|x|$ denote
its Euclidean norm. For $\alpha\in(0,\infty),$ $\gamma$ a real positive
parameter and $q>(n-\alpha)/n,$ the generalized Gaussian with parameter
$\gamma$ has the radially symmetric probability density 
\begin{equation}
G_{\gamma}(x)=\begin{cases}
\frac{1}{Z(\gamma)}\left(1-\left(q-1\right)\gamma|x|^{\alpha}\right)_{+}^{\frac{1}{q-1}} & \text{for }q\not=1\\
\frac{1}{Z(\gamma)}\exp\left(-\gamma|x|^{\alpha}\right) & \text{if }q=1
\end{cases}\text{ }\label{eq:qgauss_general}
\end{equation}
where we use the notation $\left(x\right)_{+}=\mbox{max}\left\{ x,0\right\} $,
and where $Z(\gamma)$ is the partition function such that $G_{\gamma}(x)$
integrates to one. Its expression is given by 
\begin{equation}
Z(\gamma)=\frac{1}{\alpha}\left(\gamma\right)^{-\frac{n}{\alpha}}n\,\omega_{n}\times\begin{cases}
(1-q)^{-\frac{n}{\alpha}}B\left(\frac{n}{\alpha},-\frac{1}{q-1}-\frac{n}{\alpha}\right) & \text{for }1-\frac{\alpha}{n}<q<1\\
(q-1)^{-\frac{n}{\alpha}}B\left(\frac{n}{\alpha},\frac{1}{q-1}+1\right) & \text{for }q>1\\
\Gamma\left(\frac{n}{\alpha}\right) & \text{if }q=1
\end{cases}
\end{equation}
where $B(x,y)$ denotes the beta function and $\omega_{n}=\pi^{\frac{n}{2}}/\Gamma(\frac{n}{2}+1)$
is the volume of the $n$-dimensional unit ball.  The parameter $\gamma$
is linked to the moments of the density. For instance, if $m_{\alpha}[G_{\gamma}]$
denotes the moment of order $\alpha,$ we have $\gamma^{-1}=\left((1+\frac{\alpha}{n})q-1\right)m_{\alpha}[G_{\gamma}]$, 
$\text{for }q>n/(n+\alpha)$. We may also note that $\gamma^{-\frac{1}{\alpha}}$
is a scale parameter, e.g. we have $Z(\gamma)=\gamma^{-\frac{1}{\alpha}}Z(1).$
Usually, the term $q$-Gaussian corresponds to the case $\alpha=2$
above. The generalized $q$-Gaussians are sometimes called stretched
$q$-Gaussians, where $\alpha$ is a stretch parameter. 
\end{defn}
In the sequel, we will denote $G$ the generalized Gaussian obtained
with $\gamma=1.$ For $q>1$, the density has a compact support, while
for $q\leq1$ it is defined on the whole $\mathbb{R}^{n}$ and behaves
as a power distribution for $|x|\rightarrow\infty.$ A shorthand notation
for the expression of the generalized $q$-Gaussian density is 
\begin{equation}
G_{\gamma}(x)=\frac{1}{Z(\gamma)}\exp_{q^{*}}\left(-\gamma|x|^{\alpha}\right),\label{eq:DefQGaussian}
\end{equation}
with $q^{*}=2-q$, and where the so-called $q$-exponential function,
and its inverse the $q$-logarithm, are given by
\begin{alignat}{1}
\exp_{q}(x):=\left(1+(1-q)x\right)_{+}^{\frac{1}{1-q}} & \text{ for }q\neq1\text{ and }\exp_{q=1}(x):=\exp(x),\label{eq:defExpq-1}\\
\ln_{q}(x):=\frac{x^{1-q}-1}{1-q} & \text{ for }q\neq1\text{ and }\ln_{q=1}(x):=\ln(x).\label{eq:defLnq-1-1}
\end{alignat}

It is well known that the Gaussian distribution is a central distribution
with respect to classical information measures and inequalities. In
particular, the Gaussian distribution is both a maximum entropy and
a minimum Fisher information distribution over all distributions with
the same variance. We will see that the same kind of result holds
for the family of generalized $q$-Gaussians, for Rényi or Tsallis
entropy and suitable extensions of the Fisher information.

\subsection*{Generalized ($\beta,q$)-Fisher information}

In the context of nonextensive thermostatistics, several authors have
introduced and studied generalized versions of the Fisher information.
Among these contributions we note the series of papers \cite{chimento_naudts-like_2000,casas_fisher_2002,pennini_rnyi_1998,pennini_semiclassical_2007},
and the proposals of Naudts \cite{naudts_generalised_2008,naudts_q-exponential_2009}
and Furuichi \cite{furuichi_maximum_2009,furuichi_generalized_2010},
which are related to the generalized Fisher information that are used
here -- see \cite{bercher_generalized_2012} for a discussion of
these links. 

In our recent work \cite{bercher_generalized_2012,bercher_generalized_2012b},
we have thrown a bridge between concepts in estimation theory and
tools of nonextensive thermostatistics. Using the notion of escort
distribution, we have introduced a generalized Fisher information
and established an extended version of the Cramér-Rao inequality for
parameter estimation. In the case of a location parameter, it reduces
to an extended version of the standard Cramér-Rao inequality, which
is saturated by the generalized $q$-Gaussians.  A closely related
generalized Fisher information has been originally introduced by Lutwak
\textit{et al. }\cite{lutwak_cramer_2005} in information theory
and extended to the multidimensional in \cite{lutwak_extensions_2012}
and independently in \cite{bercher__2012}. These two extensions
of the Fisher information measure are defined as follows. 
\begin{defn}
Let $f(x)$ be a probability density function defined over a subset
$\Omega$ of $\mathbb{R}^{n}$. Let $|x|$ denote the Euclidean norm
of $x$ and $\nabla f$ the gradient operator. If $f(x)$ is continuously
differentiable over $\Omega,$ then for $q\geq0$, $\beta>1$, the
generalized $(\beta,q)$-Fisher information is defined by
\begin{align}
\phi_{\beta,q}[f] & =\int_{\Omega}f(x)^{\beta(q-1)+1}\left(\frac{|\nabla f(x)|}{f(x)}\right)^{\beta}\mathrm{d}x\label{eq:Phi_GenFishera}\\
 & =E\left[f(x)^{\beta(q-1)}\left|\nabla\ln f(x)\right|^{\beta}\right]=\, E\left[\left|\nabla\ln_{q*}f(x)\right|^{\beta}\right]\label{eq:Phi_GenFisherb}
\end{align}
with $q_{*}=2-q.$

With the same notations and assumptions as above, the second generalized
$(\beta,q)$-Fisher information is 
\begin{align}
I_{\beta,q}\left[f\right] & =\left(\frac{q}{M_{q}\left[f\right]}\right)^{\beta}\,\phi_{\beta,q}\left[f\right]=\left(\frac{q}{M_{q}\left[f\right]}\right)^{\beta}\, E\left[\left|\nabla\ln_{q*}f(x)\right|^{\beta}\right].\label{eq:I_GenFishera}
\end{align}

\end{defn}
These two extensions of the Fisher information measure, the $(\beta,q)$-Fisher
information $\phi_{\beta,q}[f]$ and $I_{\beta,q}[f]$, depend on
an entropic index $q$ and on a parameter $\beta.$ As mentioned above,
the $\phi_{\beta,q}[f]$  form has been introduced by Lutwak \textit{et al. }\cite{lutwak_cramer_2005}
in information theory and is related to general results in functional
analysis. The second form of generalized Fisher information has been
introduced in the derivation of a general Cramér-Rao inequality for parameter estimation \cite{bercher_generalized_2012,bercher_generalized_2012b}.
These two information only differ by a prefactor, but actually originates
from two different settings and lead to different, though similar,
inequalities and characterizations. Of course, both information reduce to the
standard Fisher information in the case $q=1,$\,\,$\beta=2.$ We
will see that these two generalized $(\beta,q)$-Fisher information
interplay nicely with $q$-entropies and generalized $q$-Gaussians,
generalizing classical information relations, and thus allow a natural
extension of the usual Shannon-Fisher-Gaussian setting.

\subsection*{Structure and contributions of the paper}

As already indicated, the goal of the paper is to introduce two generalized
Fisher information in the context of nonextensive thermostatistics,
to document their main properties and study the interplay between
generalized information measures and generalized $q$-Gaussians. Though
the paper reviews some of our previous results, most of the results
presented here are new. The paper is organized into two main parts.
In the first part, we present several variational characterizations
of the generalized $q$-Gaussian distributions, where the generalized
Fisher information play a fundamental role. The new findings include
(i) the second generalized Stam inequality (\ref{eq:GeneralizedStamInequalityforI})
which lower bounds the product of the entropy power and of the extended
Fisher information, (ii) the definition of several information functionals
minimized by the generalized $q$-Gaussians, (iii) the derivation
of the extended de Bruijn inequality (\ref{eq:ExtendedDeBruijn},
\ref{eq:ExtendedDeBruijnb}) and of the de Bruijn entropy power inequality
(\ref{eq:ExtendeddeBruijnEntropyPower}) which intimately links the
$q$-entropies to the generalized Fisher information through a doubly
nonlinear diffusion equation. In the second section, we establish
several properties of the generalized Fisher information which might
be useful for thermodynamics considerations. The discussion includes
the additivity and mixing properties of the generalized Fisher information,
as well as an original derivation of the preservation of the Legendre
structure.

\section{Variational characterizations of generalized $q$-Gaussians and generalized
Fisher information}

It is known that the $q$-Gaussians maximize the Tsallis or
Rényi entropy among all probability distributions with a given variance
or covariance \cite{johnson_results_2007}. This means that the $q$-Gaussians
are the canonical distributions of the generalized thermostatistics
based on Tsallis or Rényi entropy. The maximum entropy formalism directly
induces a variational characterization of generalized $q$-Gaussians
as the extremal functions of the Lagrangian 
\[
L[f]=H_{q}[f]+\mu m_{\alpha}[f],
\]
where $\mu$ is a Lagrange multiplier and $m_{_{\alpha}}[f]$ denotes
the moment of order $\alpha$ of the probability distribution $f.$
A remarkable point is that the generalized $q$-Gaussians are also
solutions of other variational problems as well, and that these problems
involve the generalized ($\beta,q$)-Fisher information we defined
above. Such characterizations are described now.

\subsection{The minimum of generalized Fisher information among distributions
with a given $q$-entropy is attained for generalized $q$-Gaussians}

Another characterization of the generalized $q$-Gaussian is indeed
the fact that they minimize the extended Fisher information among
all distributions with a given $q$-entropy. This result is a consequence
of a generalized Stam inequality which lower bounds a product of the
entropy power and of the generalized Fisher information. This inequality
is stated in the Proposition below and has been established in \cite{lutwak_cramer_2005}
in the monodimensional case, in \cite{lutwak_extensions_2012} and
\cite{bercher__2012} in the multidimensional case, and in \cite{bercher_generalized_2012b}
for arbitrary norms. The statement of case (b) is new. 
\begin{prop}
{[}Generalized Stam inequalities{]} For $n\geq1,$ $\beta$ and $\alpha$
Hölder conjugates of each other, $\alpha>1,$ and $q>\max\left\{ (n-1)/n,\, n/(n+\alpha)\right\} $,
then for any probability density on $\mathbb{R}^{n}$, supposed continuously
differentiable, the following generalized Stam inequalities hold
\begin{alignat}{1}
\text{(a)\,\,\,\,\,\,} & \phi_{\beta,q}\left[f\right]^{\frac{1}{\beta}}\, N_{q}[f]^{\frac{\lambda}{2}}\geq\phi_{\beta,q}\left[G\right]^{\frac{1}{\beta}}\, N_{q}[G]^{\frac{\lambda}{2}},\label{eq:GeneralizedStamInequality}\\
\text{(b)\,\,\,\,\,\,} & I_{\beta,q}\left[f\right]^{\frac{1}{\beta}}\, N_{q}[f]^{\frac{1}{2}}\geq\, I_{\beta,q}\left[G\right]^{\frac{1}{\beta}}\, N_{q}[G]^{\frac{1}{2}}.\label{eq:GeneralizedStamInequalityforI}
\end{alignat}
with 
\begin{equation}
\lambda=n(q-1)+1>0\label{eq:def_lambda}
\end{equation}
 and with equality if and only if $f$ is any generalized $q$-Gaussian
(\ref{eq:DefQGaussian}).
\end{prop}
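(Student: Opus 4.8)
The plan is to derive both inequalities from a single sharp Gagliardo--Nirenberg inequality, after first reducing (b) to (a) by an elementary identity, so that the only real analytic input is a functional inequality whose extremals are already known to be the generalized $q$-Gaussians.

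I would begin by observing that (b) is not independent of (a). Since $N_q[f]=M_q[f]^{\frac{2}{n(1-q)}}$ and $-n(1-q)+1=n(q-1)+1=\lambda$, one has $M_q[f]^{-1}N_q[f]^{\frac12}=M_q[f]^{-1+\frac{1}{n(1-q)}}=M_q[f]^{\frac{\lambda}{n(1-q)}}=N_q[f]^{\frac{\lambda}{2}}$; inserting the definition $I_{\beta,q}[f]=(q/M_q[f])^{\beta}\phi_{\beta,q}[f]$ then gives $I_{\beta,q}[f]^{1/\beta}N_q[f]^{1/2}=q\,\phi_{\beta,q}[f]^{1/\beta}N_q[f]^{\lambda/2}$, and the same with $f$ replaced by $G$. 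Hence (b) is exactly $q$ times (a), with identical equality cases (and for $q=1$ the two informations and the two inequalities simply coincide, since $M_1[f]=1$). So it suffices to prove (a).

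For (a) I would change the unknown to $g=f^{s}$ with $s=q-1+\tfrac1\beta=q-\tfrac1\alpha$. Rewriting the definition as $\phi_{\beta,q}[f]=\int_{\Omega}f^{\beta(q-2)+1}|\nabla f|^{\beta}\,\mathrm{d}x$ and using $\nabla g=s f^{s-1}\nabla f$, this power is chosen precisely so that $\int|\nabla g|^{\beta}\,\mathrm{d}x=|s|^{\beta}\phi_{\beta,q}[f]$, while the normalization and the entropic moment become $\int g^{1/s}\,\mathrm{d}x=1$ and $\int g^{q/s}\,\mathrm{d}x=M_q[f]$. A dimensional (scaling) analysis then matches the scale-invariant quantity $\phi_{\beta,q}[f]^{1/\beta}N_q[f]^{\lambda/2}$, after substitution, with the scale-invariant ratio controlled by a sharp Gagliardo--Nirenberg inequality of del Pino--Dolbeault type,
\[
\int g^{q/s}\,\mathrm{d}x\ \le\ C\,\Bigl(\int|\nabla g|^{\beta}\,\mathrm{d}x\Bigr)^{\!\theta}\Bigl(\int g^{1/s}\,\mathrm{d}x\Bigr)^{\!1-\theta},
\]
whose extremal profiles are, up to dilation and normalization, exactly $\bigl(1-(q-1)\gamma|x|^{\alpha}\bigr)_{+}^{1/(q-1)}$, that is, the generalized $q$-Gaussians $G_{\gamma}$ of (\ref{eq:qgauss_general}). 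Sharp forms of this inequality, with identification of the extremals, are available via mass transportation \cite{cordero-erausquin_mass-transportation_2004,agueh_sharp_2008} and from \cite{del_pino_best_2002,del_pino_optimal_2003}, and in the present notation from \cite{lutwak_cramer_2005} for $n=1$ and \cite{lutwak_extensions_2012,bercher__2012} for general $n$. Pulling the inequality back through $f=g^{1/s}$ and using that $G$ saturates it would yield (a) with the constant $\phi_{\beta,q}[G]^{1/\beta}N_q[G]^{\lambda/2}$, and equality exactly for the generalized $q$-Gaussians.

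The delicate step is to verify that the hypothesis $q>\max\{(n-1)/n,\ n/(n+\alpha)\}$ is precisely the range in which the required sharp Gagliardo--Nirenberg inequality holds with $q$-Gaussian extremals: one must keep track of the sign of $s$ (for $q>1$ one lands in the ``compact-support / positive-power'' branch of the del Pino--Dolbeault family, for $q\le1$ in the ``negative-power'' branch), place the exponent triple correctly in the admissible interpolation interval, and check integrability of $f^{\beta(q-2)+1}|\nabla f|^{\beta}$ at the boundary of the support, respectively at infinity, for the extremal. A self-contained alternative would be a direct variational argument: fix $\int f=1$ and $M_q[f]$, write the Euler--Lagrange equation for the minimization of $\phi_{\beta,q}[f]$, check that the generalized $q$-Gaussian solves it, and then promote this critical point to a global minimizer --- but upgrading local to global minimality is exactly the hard part, and requires a convexity or symmetrization input of the same strength as the sharp inequality itself.
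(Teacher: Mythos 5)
Your proposal is correct and follows essentially the same route as the paper: inequality (a) is taken from the sharp Gagliardo--Nirenberg inequality of Cordero-Erausquin et al.\ as established in \cite{lutwak_extensions_2012,bercher__2012}, and (b) is deduced from (a) through the identity $I_{\beta,q}[f]^{1/\beta}N_q[f]^{1/2}=q\,\phi_{\beta,q}[f]^{1/\beta}N_q[f]^{\lambda/2}$ (the paper performs exactly this substitution using $M_q[f]=N_q[f]^{\frac{n}{2}(1-q)}$, so your explicit factor $q$ and the coincidence of equality cases match its argument). Your additional sketch of (a) via the change of function $g=f^{s}$, $s=q-1+\frac1\beta$, is consistent with the substitution $f=u^{k}$, $k=\beta/(\beta(q-1)+1)$, used elsewhere in the paper and with the cited proofs.
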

In the previous expressions, $\phi_{\beta,q}\left[G\right]$, $I_{\beta,q}\left[G\right]$
and $N_{q}[G]$ are the values taken by the ($\beta,q$)-Fisher information
and the entropy power when the probability density $f$ is the generalized
$q$-Gaussian $G$. The exact expressions of these quantities are
given in the Appendix, section \ref{sec:Information-measures-of}.
\begin{proof}
The inequality (a) is proved in \cite{lutwak_extensions_2012} and
\cite{bercher__2012} using a general sharp Gagliardo-Nirenberg inequality
due to Cordero et al. \cite{cordero-erausquin_mass-transportation_2004}.
It is quite easy to get the same kind of generalized Stam inequality
for the second generalized Fisher information. Indeed, for the inequality
(b), it suffices to replace $\phi_{\beta,q}\left[f\right]^{\frac{1}{\beta}}$
in (\ref{eq:GeneralizedStamInequality}) by $\, I_{\beta,q}\left[f\right]^{\frac{1}{\beta}}\, M_{q}\left[f\right]/q$,
to use the fact that $M_{q}[f]=N_{q}[f]^{\frac{n}{2}(1-q)}$ and to
simplify the exponent to obtain (\ref{eq:GeneralizedStamInequalityforI}).
\end{proof}
The generalized Stam inequalities imply that the generalized $q$-Gaussian
minimize the extended Fisher information within the set of probability
distributions with a fixed $q$-entropy power: $G_{\gamma}$ is the
solution of 
\begin{equation}
\inf_{f}\left\{ \phi_{\beta,q}[f]\text{ or }I_{\beta,q}[f]:\, f\in\mathcal{P},\, N_{q}[f]=N_{q}[G_{\gamma}]\right\} ,\label{eq:pb_min_Fish_Nq}
\end{equation}
where $\mathcal{P}$ is the probability simplex. In turn, this leads
to the following variational description.
\begin{prop}
\label{Prop2}The generalized $q$-Gaussian $G$ minimize the following
two information functionals
\begin{align}
J_{1}[f] & =\frac{1}{\beta\lambda}\phi_{\beta,q}[f]N_{q}[G]+\frac{1}{2}\phi_{\beta,q}[G]N_{q}[f]\label{eq:J1_informationfunctional}\\
J_{2}[f] & =\frac{1}{\beta}I_{\beta,q}[f]N_{q}[G]+\frac{1}{2}I_{\beta,q}[G]N_{q}[f]\label{eq:eq:J2_informationfunctional}
\end{align}
with $\lambda$ given by (\ref{eq:def_lambda}). 
\end{prop}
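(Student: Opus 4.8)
The plan is to deduce both claims directly from the generalized Stam inequalities of the previous Proposition, combined with the weighted arithmetic--geometric mean (Young) inequality. The crucial observation is that $J_1$ and $J_2$ have been calibrated so that the ratio of the weight carried by the Fisher term to the weight carried by the entropy--power term matches the ratio of the corresponding exponents in the Stam inequality: in $J_1$ the weights $\tfrac{1}{\beta\lambda}$ and $\tfrac12$ have ratio $\tfrac{2}{\beta\lambda}$, which is exactly the ratio $\tfrac{1/\beta}{\lambda/2}$ of the exponents of $\phi_{\beta,q}$ and $N_q$ in (a); likewise the weights $\tfrac1\beta,\tfrac12$ in $J_2$ have ratio $\tfrac2\beta=\tfrac{1/\beta}{1/2}$, matching (b). This is what makes the bound tight at $G$.

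Concretely, for $J_1$ I would write $J_1[f]=A+B$ with $A=\tfrac{1}{\beta\lambda}\phi_{\beta,q}[f]N_q[G]$ and $B=\tfrac12\phi_{\beta,q}[G]N_q[f]$, both positive, and apply $\theta u+(1-\theta)v\geq u^{\theta}v^{1-\theta}$ with $u=A/\theta$, $v=B/(1-\theta)$ and weight $\theta=\tfrac{2}{\beta\lambda+2}$. This $\theta$ is chosen so that $\theta/(1-\theta)=2/(\beta\lambda)$ and, simultaneously, $\beta\lambda\theta=2(1-\theta)$, which makes the numerical prefactors cancel, leaving
\[
J_1[f]\geq\frac{\beta\lambda+2}{2\beta\lambda}\,\phi_{\beta,q}[G]^{1-\theta}N_q[G]^{\theta}\,\phi_{\beta,q}[f]^{\theta}N_q[f]^{1-\theta}.
\]
Raising the Stam inequality (a) to the power $2/\lambda>0$ gives $\phi_{\beta,q}[f]^{2/(\beta\lambda)}N_q[f]\geq\phi_{\beta,q}[G]^{2/(\beta\lambda)}N_q[G]$, and then raising to the power $1-\theta$ and using $(2/(\beta\lambda))(1-\theta)=\theta$ yields $\phi_{\beta,q}[f]^{\theta}N_q[f]^{1-\theta}\geq\phi_{\beta,q}[G]^{\theta}N_q[G]^{1-\theta}$. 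Substituting this collapses the right-hand side to $\tfrac{\beta\lambda+2}{2\beta\lambda}\phi_{\beta,q}[G]N_q[G]$, which is exactly $J_1[G]=\bigl(\tfrac{1}{\beta\lambda}+\tfrac12\bigr)\phi_{\beta,q}[G]N_q[G]$. The computation for $J_2$ is identical after replacing $\phi_{\beta,q}$ by $I_{\beta,q}$, the Stam inequality (a) by (b), the power $2/\lambda$ by $2$, and the weight by $\theta=\tfrac{2}{\beta+2}$; it gives $J_2[f]\geq\bigl(\tfrac1\beta+\tfrac12\bigr)I_{\beta,q}[G]N_q[G]=J_2[G]$.

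It remains to identify the minimizer. Equality in the Stam inequalities holds precisely when $f$ is some generalized $q$-Gaussian $G_\gamma$, while equality in the AM--GM step requires $A/\theta=B/(1-\theta)$; since the two prefactors coincide this reduces to $\phi_{\beta,q}[f]/N_q[f]=\phi_{\beta,q}[G]/N_q[G]$ (resp.\ with $I_{\beta,q}$ in place of $\phi_{\beta,q}$). Using the dilation relations $N_q[G_\gamma]=\gamma^{-2/\alpha}N_q[G]$ and $\phi_{\beta,q}[G_\gamma]=\gamma^{\beta\lambda/\alpha}\phi_{\beta,q}[G]$ (the analogue for $I_{\beta,q}$ differing only through the explicit scaling of the $M_q$ prefactor), this last identity forces $\gamma^{(\beta\lambda+2)/\alpha}=1$, i.e. $\gamma=1$. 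Hence in both cases the unique minimizer is $G$ itself.

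The main obstacle is not conceptual but a matter of careful bookkeeping: one must pick the Young weight $\theta$ so that it is compatible both with the exponent ratio appearing in the Stam inequality and with the cancellation $\beta\lambda\theta=2(1-\theta)$, and then verify that the algebra leaves precisely the value $J_i[G]$ and nothing superfluous --- which is exactly where the particular coefficients $1/(\beta\lambda)$, $1/\beta$ and $1/2$ in the definitions of $J_1$ and $J_2$ enter. The only genuinely non-routine ingredient is the scaling computation in the equality case, needed to exclude all the dilates $G_\gamma$ with $\gamma\neq1$.
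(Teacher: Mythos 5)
Your proof is correct, and it follows a genuinely different route from the paper's. The paper treats $J_1$ (and $J_2$) as the Lagrangian of the constrained problem (\ref{eq:pb_min_Fish_Nq}): knowing from the Stam inequality that a $q$-Gaussian solves that problem, it restricts the Lagrangian to the family $G_\gamma$, uses the scaling identities (\ref{eq:ScaleEqualities}) to find the multiplier $\mu$ for which the optimum sits at $\gamma=1$, and then normalizes to obtain $J_1$. You instead prove the global bound $J_i[f]\geq J_i[G]$ directly: the weighted AM--GM step with $\theta=2/(\beta\lambda+2)$ (resp.\ $\theta=2/(\beta+2)$) is calibrated exactly so that the two prefactors coincide, and the Stam inequality (a) (resp.\ (b)), raised to the appropriate power, collapses the geometric mean to $\phi_{\beta,q}[G]^{1-\theta}N_q[G]^{\theta}\cdot\phi_{\beta,q}[G]^{\theta}N_q[G]^{1-\theta}$, giving precisely $J_i[G]$; I checked the exponent bookkeeping ($\tfrac{2}{\beta\lambda}(1-\theta)=\theta$, etc.) and it is right. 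What your approach buys is a self-contained and fully rigorous verification of minimality over all admissible $f$, plus an equality analysis: combining the equality case of AM--GM with the equality case of Stam and the dilation relations $\phi_{\beta,q}[G_\gamma]=\gamma^{\beta\lambda/\alpha}\phi_{\beta,q}[G]$, $N_q[G_\gamma]=\gamma^{-2/\alpha}N_q[G]$ (and their $I_{\beta,q}$ analogue) to exclude all $G_\gamma$ with $\gamma\neq1$, which the paper does not spell out. What the paper's route buys is the interpretation of $J_1,J_2$ as the natural Lagrangians of the entropy-constrained Fisher minimization, i.e.\ an explanation of where the particular coefficients come from; your observation that the weights are tuned to the Stam exponents plays the same explanatory role in your argument. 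Both proofs rest on the same essential input, the generalized Stam inequalities.
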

Another statement of this result could be in terms of an energy functional.
 Let $f=u^{k},$ with $k=\beta/\left(\beta(q-1)+1\right)$. With
this notation, the generalized Fisher information reduces to
\begin{equation}
\phi_{\beta,q}[f]=|k|^{\beta}\int_{\Omega}|\nabla u(x)|^{\beta}\mathrm{d}x,\label{eq:GenFisherDefinition_reduced}
\end{equation}
which is the $\beta$-Dirichlet energy of function $k\, u(x)$. In
addition, $M_{q}[f]=\int_{\Omega}u(x)^{kq}\mathrm{d}x$ is a measure
of informational energy of order $kq$ in the sense of \cite{pardo_order-alpha_1986,onicescu_energie_1966}.
Then, (\ref{eq:J1_informationfunctional}) becomes
\begin{equation}
J_{1}[f]=\frac{1}{\beta\lambda}N_{q}[G]\,\int_{\Omega}|\nabla u(x)|^{\beta}\mathrm{d}x+\frac{1}{2}\phi_{\beta,q}[G]\left(\int_{\Omega}u(x)^{kq}\mathrm{d}x\right)^{\frac{1}{1-q}}
\end{equation}
 a weighted sum of these two energies. 
\begin{proof}
Consider the first generalized Stam inequality. As already indicated,
a direct consequence of this inequality is that the generalized $q$-Gaussian $G_\gamma$
solves problem (\ref{eq:pb_min_Fish_Nq}). The Lagrangian corresponding
to this problem is 
\begin{equation}
J(f)=\phi_{\beta,q}[f]+\mu N_{q}[f].
\end{equation}
This Lagrangian is minimum for a $q$-Gaussian $G_{\gamma}$ with
parameter $\gamma.$ In the Lagrangian, for a fixed parameter $\gamma,$
the Lagrange parameter $\mu$ is a function of $\gamma.$ Suppose
that $\mu$ is chosen such that the optimum distribution is the generalized
$q$-Gaussian with $\gamma=1.$ In such conditions, we have $J(G_{\gamma})\geq J(G),$
$\forall\gamma.$ On the other hand, we have the following scaling
identities:
\begin{equation}
\begin{cases}
M_{q}[G_{\gamma}]=\gamma^{\frac{n}{\alpha}(q-1)}M_{q}[G],\\
\phi_{\beta,q}[G_{\gamma}]=\gamma^{\frac{\beta\lambda}{\alpha}}\phi_{\beta,q}[G],\\
I_{\beta,q}[G_{\gamma}]=\gamma^{\frac{\beta}{\alpha}}I{}_{\beta,q}[G],\\
Z(\gamma)=\gamma^{-\frac{n}{\alpha}}Z(1)
\end{cases}\label{eq:ScaleEqualities}
\end{equation}
with $\lambda$ given by (\ref{eq:def_lambda}). Thus, the inequality
$J(G_{\gamma})\geq J(G)$ means that 
\begin{equation}
J(G_{\gamma})=\gamma^{\frac{\beta\lambda}{\alpha}}\phi_{\beta,q}[G]+\mu\gamma^{-\frac{2}{\alpha}}N_{q}[G]
\end{equation}
is minimum in $\gamma$ for $\gamma=1$. The derivative with respect
to $\gamma$ then vanishes at $\gamma=1,$ and this yields the value
\begin{equation}
\mu=\mbox{\ensuremath{\frac{\beta\lambda}{2}}}\frac{\phi_{\beta,q}[G]}{N_{q}[G]}.
\end{equation}
 Then, it only remains to set $J_{1}[f]=\mbox{\ensuremath{\frac{N_{q}[G]}{\beta\lambda}\,}}J[f]$
to obtain (\ref{eq:J1_informationfunctional}). The second information
functional (\ref{eq:eq:J2_informationfunctional}) is established
in the same way. 
\end{proof}

\subsection{The minimum of generalized Fisher information among distributions
with a given moment is attained for generalized $q$-Gaussians}

In addition to be the maximum entropy distribution in the set of all
distributions with a given variance, it is well-known that the Gaussian
distribution is also a minimum Fisher information distribution over
all distributions with the same variance. This can be seen as a consequence
of the Cramér-Rao inequality for a location parameter. The same kind
of characterization holds in the multidimensional case for the generalized
$q$-Gaussian and the generalized Fisher information. 

In \cite{lutwak_cramer_2005,lutwak_extensions_2012,bercher__2012},
a Cramér-Rao inequality for generalized $q$-Gaussian involving the
($\beta,q$)-Fisher information $\phi_{\beta,q}[f]$ has been established.
The ($\beta,q$)-Fisher information $I_{\beta,q}[f]$ has been introduced
in the context of parameter estimation in \cite{bercher_generalized_2012,bercher_generalized_2012b}
and a related Cramér-Rao inequality proved. These inequalities mean
that among all distributions with a given moment, the generalized
$q$-Gaussians are the minimizers of extended versions of the Fisher
information, just as the standard Gaussian minimizes Fisher information
over all distributions with a given variance. We begin by recalling
the generalized Cramér-Rao inequalities involving the generalized
Fisher information measures and saturated by the generalized $q$-Gaussians.
Then, we show that the generalized $q$-Gaussians can also be characterized
as the minimizers of two new variational problems. 
\begin{prop}
\label{theo2} {[}Generalized Cramér-Rao inequalities{]} For $n\geq1,$
$\beta$ and $\alpha$ Hölder conjugates of each other, $\alpha>1,$
$q>\max\left\{ (n-1)/n,\, n/(n+\alpha)\right\} $ then for any probability
density $f$ on $\Omega\subseteq\mathbb{R}^{n}$, supposed continuously
differentiable and such that the involved information measures are
finite,
\begin{align}
m_{\alpha}\left[f\right]^{\frac{1}{\alpha}}\,\phi_{\beta,q}\left[f\right]^{\frac{1}{\beta\lambda}} & \geq m_{\alpha}\left[G\right]^{\frac{1}{\alpha}}\,\phi_{\beta,q}\left[G\right]^{\frac{1}{\beta\lambda}}\label{eq:GeneralizedCramerLutwak}\\
m_{\alpha}\left[f\right]^{\frac{1}{\alpha}}\, I_{\beta,q}\left[f\right]^{\frac{1}{\beta}} & \geq m_{\alpha}\left[G\right]^{\frac{1}{\alpha}}\, I_{\beta,q}\left[G\right]^{\frac{1}{\beta}}=n\label{eq:GeneralizedCramerJFB}
\end{align}
with $\lambda=n(q-1)+1$, where the general ($\beta,q$)-Fisher information
are defined in (\ref{eq:Phi_GenFishera}) and (\ref{eq:I_GenFishera})
and where the equality holds iff $f$ is a generalized Gaussian $g=G_{\gamma}.$\end{prop}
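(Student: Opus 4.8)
The plan is to obtain both Cramér--Rao inequalities by \emph{chaining} the two generalized Stam inequalities (\ref{eq:GeneralizedStamInequality}) and (\ref{eq:GeneralizedStamInequalityforI}), which are already at our disposal, with the maximum-$q$-entropy property of the generalized $q$-Gaussians, rather than re-establishing the estimation-theoretic Cramér--Rao bound of \cite{bercher_generalized_2012b} directly. Two preliminary facts are needed. First, under the dilation $f_\sigma(x)=\sigma^{-n}f(x/\sigma)$ one has $m_\alpha[f_\sigma]=\sigma^\alpha m_\alpha[f]$ and $M_q[f_\sigma]=\sigma^{n(1-q)}M_q[f]$, hence $N_q[f_\sigma]=\sigma^2 N_q[f]$, so the quantity $N_q[f]/m_\alpha[f]^{2/\alpha}$ is invariant under dilation. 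Second, among probability densities with a prescribed moment $m_\alpha$, the $q$-entropy $H_q$ --- equivalently $N_q$, since $\exp$ is increasing --- is maximized by the generalized $q$-Gaussian with stretch exponent $\alpha$; this is the maximum-entropy property associated with the Lagrangian $L[f]=H_q[f]+\mu m_\alpha[f]$ recalled at the opening of this section, whose Euler--Lagrange equation produces exactly the profile (\ref{eq:qgauss_general}). Choosing $\gamma$ so that $G_\gamma$ has the same $\alpha$-moment as $f$ and using scale invariance to pass to the normalization $\gamma=1$, these two facts combine into
\begin{equation}
\frac{N_q[G]}{N_q[f]}\ \ge\ \left(\frac{m_\alpha[G]}{m_\alpha[f]}\right)^{\frac{2}{\alpha}},
\end{equation}
valid for every admissible $f$, with equality precisely when $f$ is a generalized $q$-Gaussian.

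Next I would chain this bound with the Stam inequalities. From (\ref{eq:GeneralizedStamInequality}) one gets $\phi_{\beta,q}[f]^{1/\beta}\ge\phi_{\beta,q}[G]^{1/\beta}\,(N_q[G]/N_q[f])^{\lambda/2}$; since $\lambda=n(q-1)+1>0$ throughout the stated range, inserting the displayed bound and raising to the power $1/\lambda$ gives $\phi_{\beta,q}[f]^{1/(\beta\lambda)}\ge\phi_{\beta,q}[G]^{1/(\beta\lambda)}\,(m_\alpha[G]/m_\alpha[f])^{1/\alpha}$, which is (\ref{eq:GeneralizedCramerLutwak}) after multiplication by $m_\alpha[f]^{1/\alpha}$. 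The same manipulation applied to (\ref{eq:GeneralizedStamInequalityforI}) --- where the exponent is $1/2$ rather than $\lambda/2$, so no final root is required --- produces $m_\alpha[f]^{1/\alpha}\,I_{\beta,q}[f]^{1/\beta}\ge m_\alpha[G]^{1/\alpha}\,I_{\beta,q}[G]^{1/\beta}$, i.e. (\ref{eq:GeneralizedCramerJFB}). It then remains to verify that the right-hand side of (\ref{eq:GeneralizedCramerJFB}) equals the dimension $n$: substituting the explicit expressions for $m_\alpha[G]$ and $I_{\beta,q}[G]$ from the Appendix (section~\ref{sec:Information-measures-of}), using $\gamma^{-1}=((1+\tfrac{\alpha}{n})q-1)\,m_\alpha[G]$ and the Hölder conjugacy $\tfrac1\alpha+\tfrac1\beta=1$, the Beta-function factors cancel and the product collapses to $n$. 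Equivalently this constant can be pinned down directly: the location-parameter score of $G_\gamma$ is radial and proportional to $|x|^{\alpha-2}x$, and applying Hölder with exponents $\alpha$ and $\beta$ to the integration-by-parts identity $\int x\cdot\nabla f\,\mathrm{d}x=-n$, weighted by suitable powers of $f$ so that the two Hölder factors become $m_\alpha[f]^{1/\alpha}$ and $I_{\beta,q}[f]^{1/\beta}$, both reproves (\ref{eq:GeneralizedCramerJFB}) and forces its right-hand side to be exactly $n$, since the $q$-Gaussian is the profile realizing the Hölder equality.

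For the equality case, equality in (\ref{eq:GeneralizedCramerLutwak}) or (\ref{eq:GeneralizedCramerJFB}) forces equality in both inequalities that were chained; equality in the Stam inequality already characterizes $f$ as a generalized $q$-Gaussian, consistently with equality in the maximum-entropy bound. Conversely, the scaling identities (\ref{eq:ScaleEqualities}) together with $m_\alpha[G_\gamma]=\gamma^{-1}m_\alpha[G]$ show that $m_\alpha[G_\gamma]^{1/\alpha}\phi_{\beta,q}[G_\gamma]^{1/(\beta\lambda)}$ and $m_\alpha[G_\gamma]^{1/\alpha}I_{\beta,q}[G_\gamma]^{1/\beta}$ are independent of $\gamma$, so every $G_\gamma$ attains equality and the equality set is exactly $\{G_\gamma:\gamma>0\}$. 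The step I expect to be the \textbf{main obstacle} is the maximum-$q$-entropy inequality with a general moment of order $\alpha$: one must confirm that the Euler--Lagrange extremal is a genuine global maximizer of $H_q$ under the $m_\alpha$ constraint over the whole admissible range $q>\max\{(n-1)/n,\,n/(n+\alpha)\}$ --- controlling the concavity/convexity of the relevant functional, the integrability of $m_\alpha$ for that extremal, and the sign of $1-q$ --- and that the scale-invariance reduction to $\gamma=1$ is legitimate; by comparison, the chaining with the Stam inequalities and the evaluation of the constant $n$ are routine once the Appendix formulas are at hand.
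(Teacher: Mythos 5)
Your argument is essentially correct, but it takes a genuinely different route from the paper: the paper does not reprove these inequalities at all, it invokes the proofs of (\ref{eq:GeneralizedCramerLutwak}) in \cite{lutwak_extensions_2012,bercher__2012} and of (\ref{eq:GeneralizedCramerJFB}) in \cite{bercher_generalized_2012,bercher_generalized_2012b} (the constant $n$ being a byproduct of the latter proof), whereas you derive both bounds by chaining the generalized Stam inequalities (\ref{eq:GeneralizedStamInequality})--(\ref{eq:GeneralizedStamInequalityforI}) with the moment--entropy inequality (maximum $q$-entropy at fixed $\alpha$-moment). The chaining itself is sound: the dilation relations match (\ref{eq:ScaleEqualities}), $\lambda=n(q-1)+1>0$ on the stated range legitimizes the $1/\lambda$ power, and since Proposition 1(a) rests on a sharp Gagliardo--Nirenberg inequality there is no circularity. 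What your route buys is a reduction of the Cramér--Rao bounds to results already displayed in the text; what it costs is the ingredient you yourself flag as the main obstacle: the paper recalls the maximum-$q$-entropy property only for a variance/covariance constraint \cite{johnson_results_2007}, so for a general $\alpha$-moment you must either import the moment--entropy inequality of \cite{lutwak_cramer_2005,lutwak_extensions_2012} or prove global (not merely stationary, as the Euler--Lagrange argument gives) optimality of $G_\gamma$ over the whole range $q>\max\{(n-1)/n,\,n/(n+\alpha)\}$, including the equality case needed for your ``iff'' statement. Note that your alternative ``direct'' argument is in fact the proof the paper points to for (\ref{eq:GeneralizedCramerJFB}): integrating $\nabla\cdot\left(x\,f(x)^{q}\right)$ over $\Omega$ with vanishing boundary terms gives $q\int f^{q-1}\,x\cdot\nabla f\,\mathrm{d}x=-n\,M_{q}[f]$, and Hölder with exponents $\alpha,\beta$ applied to the splitting $\left(x f^{1/\alpha}\right)\cdot\left(f^{\,q-1-1/\alpha}\nabla f\right)$ yields $n\,M_{q}[f]\le q\,m_{\alpha}[f]^{1/\alpha}\phi_{\beta,q}[f]^{1/\beta}$, i.e. $m_{\alpha}[f]^{1/\alpha}I_{\beta,q}[f]^{1/\beta}\ge n$, with equality exactly when Hölder saturates, which forces the $q$-Gaussian profile; promoting that sketch to your primary argument would remove the moment--entropy dependency and bring you in line with the cited proofs, while your chaining remains a nice illustration of how the Stam, moment--entropy and Cramér--Rao inequalities form a consistent triangle saturated by the same family $\{G_{\gamma}\}$.
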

\begin{proof}
The proof of (\ref{eq:GeneralizedCramerLutwak}) can be found in \cite{lutwak_extensions_2012}
and \cite{bercher__2012} in the multidimensional case. As far as
(\ref{eq:GeneralizedCramerJFB}) is concerned, the proof is given
in \cite{bercher_generalized_2012,bercher_generalized_2012b} as
a particular case of the more general Cramér-Rao inequality for parameter
estimation. The fact that the lower bound $m_{\alpha}\left[G\right]^{\frac{1}{\alpha}}\, I_{\beta,q}\left[G\right]^{\frac{1}{\beta}}$
is exactly equal to $n$ is a direct consequence of the proof of (\ref{eq:GeneralizedCramerJFB}).

\end{proof}
The Cramér-Rao inequalities (\ref{eq:GeneralizedCramerLutwak})-(\ref{eq:GeneralizedCramerJFB})
imply that $G_{\gamma}$ realizes the minimum of the generalized Fisher
information in the set of all probability densities with a given moment
of order $\alpha:$
\begin{equation}
\inf_{f}\left\{ \phi_{\beta,q}[f]\text{ or }I_{\beta,q}[f]:\, f\in\mathcal{P},\, m_{\alpha}[f]=m_{\alpha}[G_{\gamma}]\right\} ,\label{eq:pb_inf_Fish_m}
\end{equation}
where $\mathcal{P}$ is the probability simplex. As above and as a
direct consequence, we obtain a variational characterization of the
generalized $q$-Gaussians:
\begin{prop}
The generalized $q$-Gaussian $G$ minimize the following two information
functionals
\begin{align}
J_{3}[f] & =\frac{1}{\beta\lambda}\phi_{\beta,q}[f]m_{\alpha}[G]+\frac{1}{\alpha}\phi_{\beta,q}[G]m_{\alpha}[f]\label{eq:J1_informationfunctional-1}\\
J_{4}[f] & =\frac{1}{\beta}I_{\beta,q}[f]m_{\alpha}[G]+\frac{1}{\alpha}I_{\beta,q}[G]m_{\alpha}[f]\label{eq:eq:J2_informationfunctional-1}
\end{align}
\end{prop}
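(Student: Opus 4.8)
The plan is to repeat, \emph{mutatis mutandis}, the argument already used for Proposition~\ref{Prop2}, but starting now from the generalized Cramér--Rao inequalities of Proposition~\ref{theo2} instead of from the Stam inequalities. By (\ref{eq:GeneralizedCramerLutwak}) and (\ref{eq:GeneralizedCramerJFB}), the generalized $q$-Gaussian $G_{\gamma}$ solves the constrained problem (\ref{eq:pb_inf_Fish_m}): it minimizes $\phi_{\beta,q}[f]$, resp. $I_{\beta,q}[f]$, among all densities of $\mathcal{P}$ with $m_{\alpha}[f]=m_{\alpha}[G_{\gamma}]$. Since this holds for every admissible value of the moment, the associated Lagrangian $J(f)=\phi_{\beta,q}[f]+\mu\,m_{\alpha}[f]$, resp. $J(f)=I_{\beta,q}[f]+\mu\,m_{\alpha}[f]$, attains its unconstrained infimum over $\mathcal{P}$ at a generalized $q$-Gaussian, and, because the family $\{G_{\gamma}\}$ exhausts the possible moment levels, this infimum reduces to the scalar minimization $\inf_{f}J(f)=\inf_{\gamma>0}J(G_{\gamma})$.

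Next I would choose $\mu$ so that this minimizer is exactly $G=G_{1}$. This uses the scaling identities (\ref{eq:ScaleEqualities}) together with the elementary moment scaling $m_{\alpha}[G_{\gamma}]=\gamma^{-1}m_{\alpha}[G]$, which follows at once from the relation $\gamma^{-1}=((1+\alpha/n)q-1)\,m_{\alpha}[G_{\gamma}]$ recorded in the Definition. For the first functional,
\[
J(G_{\gamma})=\gamma^{\frac{\beta\lambda}{\alpha}}\phi_{\beta,q}[G]+\mu\,\gamma^{-1}m_{\alpha}[G],
\]
with $\lambda$ as in (\ref{eq:def_lambda}); imposing that the derivative in $\gamma$ vanish at $\gamma=1$ gives $\mu=\frac{\beta\lambda}{\alpha}\,\phi_{\beta,q}[G]/m_{\alpha}[G]$. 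Substituting this value and multiplying the functional $J$ by the positive constant $m_{\alpha}[G]/(\beta\lambda)$ reproduces precisely $J_{3}[f]$ of (\ref{eq:J1_informationfunctional-1}), so $G$ minimizes $J_{3}$. The same computation with $I_{\beta,q}$, using $I_{\beta,q}[G_{\gamma}]=\gamma^{\beta/\alpha}I_{\beta,q}[G]$, yields $\mu=\frac{\beta}{\alpha}\,I_{\beta,q}[G]/m_{\alpha}[G]$ and, after multiplication by $m_{\alpha}[G]/\beta$, the functional $J_{4}[f]$ of (\ref{eq:eq:J2_informationfunctional-1}); hence $G$ minimizes $J_{4}$.

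There is no serious obstacle here: all the genuine content is carried by the Cramér--Rao inequalities of Proposition~\ref{theo2}, which are cited, and the remainder is a Lagrangian-and-homogeneity bookkeeping exercise entirely parallel to the proof of Proposition~\ref{Prop2}. The two points deserving a word of care are, first, the reduction of the unconstrained minimization of $J$ to the scalar problem $\inf_{\gamma}J(G_{\gamma})$, which is legitimate precisely because $G_{\gamma}$ solves the constrained problem at \emph{every} moment level; and, second, checking that the stationary point $\gamma=1$ is a genuine minimum. The latter is clear since the exponents $\beta\lambda/\alpha$ and $\beta/\alpha$ are positive (as $\lambda>0$, $\beta>1$, $\alpha>1$), so $J(G_{\gamma})$ is of the form $a\gamma^{p}+b\gamma^{-1}$ with $a,b>0$, $p>0$; it tends to $+\infty$ as $\gamma\to0^{+}$ and as $\gamma\to\infty$ and has a unique stationary point in $\gamma$, which is therefore its global minimum.
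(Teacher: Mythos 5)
Your proposal is correct and follows essentially the same route as the paper: it invokes the generalized Cramér--Rao inequalities to show $G_\gamma$ solves the constrained problem (\ref{eq:pb_inf_Fish_m}), forms the Lagrangian $J(f)=\phi_{\beta,q}[f]+\mu m_\alpha[f]$ (resp.\ with $I_{\beta,q}$), uses the scaling identities (\ref{eq:ScaleEqualities}) and $m_\alpha[G_\gamma]=\gamma^{-1}m_\alpha[G]$ to fix $\mu$ by stationarity at $\gamma=1$, and rescales to obtain $J_3$ and $J_4$. Your added checks (reduction to the scalar problem in $\gamma$ and that $\gamma=1$ is a genuine global minimum of $a\gamma^{p}+b\gamma^{-1}$) are fine and only make explicit what the paper leaves implicit.
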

\begin{proof}
The proof proceeds just as the proof of Proposition \ref{Prop2}.
Let us consider the first Cramér-Rao inequality (\ref{eq:GeneralizedCramerLutwak})
and the problem (\ref{eq:pb_inf_Fish_m}).  The corresponding Lagrangian
is 
\begin{equation}
J(f)=\phi_{\beta,q}[f]+\mu m_{\alpha}[f].
\end{equation}
Assume that $\mu$ is chosen such that the optimum distribution is
the generalized $q$-Gaussian with $\gamma=1.$ In such conditions,
we have $J(G_{\gamma})\geq J(G),$ $\forall\gamma.$ By the scaling
identities (\ref{eq:ScaleEqualities}) and the additional fact that
$m_{\alpha}[G_{\gamma}]=\gamma^{-1}m_{\alpha}[G],$ we get that 
\begin{equation}
J(G_{\gamma})=\gamma^{\frac{\beta\lambda}{\alpha}}\phi_{\beta,q}[G]+\mu\gamma^{-1}m_{\alpha}[G].
\end{equation}
Since this functional is minimum for $\gamma=1$, its derivative vanishes
at $\gamma=1,$ which gives 
\begin{equation}
\mu=\mbox{\ensuremath{\frac{\beta\lambda}{\alpha}}}\frac{\phi_{\beta,q}[G]}{m_{\alpha}[G]}
\end{equation}
and in turn the inequality (\ref{eq:J1_informationfunctional-1}).
The second information functional (\ref{eq:eq:J2_informationfunctional-1})
is established in the same way. 
\end{proof}
We shall note that up to the change of function $u(x)^{k}=f(x),$
the Euler-Lagrange equation associated with the two functionals (\ref{eq:J1_informationfunctional-1})
and (\ref{eq:eq:J2_informationfunctional-1}) is a $\beta$-Laplace
equation of the form 
\begin{equation}
\triangle_{\beta}u(x)-\frac{1}{\alpha k^{\beta-1}}\,\frac{\phi_{\beta,q}[G]}{m_{\alpha}[G]}\,|x|^{\alpha}\, u(x)^{k-1}=0,
\end{equation}
where we used the $\beta$-Laplacian operator $\Delta_{\beta}u\,:=\text{div}\left(|\nabla u|^{\beta-2}\,\nabla u\right)$.
In the monodimensional case and if $\beta=2,$ we obtain an instance
of the generalized Emden-Fowler equation $u''(x)+h(x)\, u(x)^{\gamma}=0$,
where $h(x)$ is a given function. This kind of equations arises in
studies of gaseous dynamics in astrophysics, in certain problems in
fluid mechanics and pseudoplastic flow, see \cite{nachman_nonlinear_1980},
as well as in some reaction-diffusion processes.

\subsection{The derivative of $q$-entropies give the generalized Fisher information
-- An extended de Bruijn identity }

A fundamental connection between the Boltzmann-Shannon entropy, Fisher
information, and the Gaussian distribution is given by the de Bruijn
identity \cite{stam_inequalities_1959}. We show here that this important
connection can be extended to the $q$-entropies, the generalized
Fisher information and the generalized $q$-Gaussians. 

 The de Bruijn identity states that if $Y_{t}=X+\sqrt{2t}Z$ where
$Z$ is a standard Gaussian vector and $X$ a random vector of $\mathbb{R}^{n},$
independent of $Z,$ then 
\begin{equation}
\frac{\text{d}}{\text{d}t}H[f_{Y_{t}}]=I_{2,1}[f_{Y_{t}}]=\phi_{2,1}[f_{Y_{t}}],\label{eq:deBruijnIdentity}
\end{equation}
where $f_{Y_{t}}$ denotes the density of $Y_{t}=X+\sqrt{2t}Z$. In
thermodynamics, this can be seen as a consequence of the second law
for an isolated system out of equilibrium. The standard proof of the
de Bruijn identity uses the fact that $Y_{t}$ satisfies the well-known
heat equation \cite{widder_heat_1975} 
\begin{equation}
\frac{\partial f}{\partial t}=\Delta f,\label{eq:he}
\end{equation}
where $\Delta$ denotes the Laplace operator. 

It is possible to consider more general versions of the heat equation.
In particular, it is known that the porous medium equation, or the
fast diffusion equation, admit $q$-Gaussians as solutions. This connects
the $q$-Gaussian distributions to differential equations of physics,
and has been noticed in the nonextensive context, see e.g. \cite{schwaemmle_q-gaussians_2008,ohara_geometric_2009}.
The porous medium equation and fast diffusion equation correspond
to the differential equation 
\begin{equation}
\frac{\partial f}{\partial t}=\Delta f^{m},\label{eq:pme-fpe}
\end{equation}
with $m>1$ for the porous medium equation and $<1$ for the fast
diffusion. These two equations have been exhaustively studied and
characterized by J. L. Vazquez, e.g. in the two books \cite{vazquez_porous_2006,vazquez_smoothing_2006}.
These equations appear in a large number of physical situations, including
fluid mechanics, nonlinear heat transfer or diffusion. Other applications
have been reported in mathematical biology, lubrification, boundary
layer theory, etc, see the series of applications presented in \cite[chapters 2 and 21]{vazquez_porous_2006}
and references therein. 

The porous medium equation (\ref{eq:pme-fpe}) can be generalized
to a wider form, the \textit{doubly nonlinear equation}, which involves
a $p$-Laplacian operator $\Delta_{p}f\,:=\text{div}\left(|\nabla f|^{p-2}\,\nabla f\right),$
and the power $m$ of the porous medium or fast diffusion equation.
This doubly nonlinear equation takes the form
\begin{equation}
\frac{\partial}{\partial t}f=\Delta_{\beta}f^{m}=\text{div}\left(|\nabla f^{m}|^{\beta-2}\,\nabla f^{m}\right),\label{eq:dnle}
\end{equation}
where we use $p=\beta$ for convenience and coherence with the other
notations in the paper. The $\beta$-Laplacian operator typically
in the minimization of a Dirichlet energy like
$\int|\nabla f|^{\beta}\text{d}x$ which leads to the Euler-Lagrange
equation $\Delta_{\beta}f=0.$ A discussion on the $p$-Laplace equation
can be found in \cite{lindqvist_notes_2006}. 

As we can see, the doubly nonlinear equation includes the standard
heat equation ($\beta=2,$ $m=1$), the $\beta$-Laplace equation
($\beta\neq2,$ $m=1$), the porous medium equation ($\beta=2,$ $m>1$)
and the fast diffusion equation ($\beta=2,$ $m<1$). It can be shown,
see \cite[page 192]{vazquez_smoothing_2006}, that for $m(\beta-1)+(\beta/n)-1>0$,
(\ref{eq:dnle}) has a unique self-similar solution, whose initial
value is the Dirac mass at the origin. This fundamental solution is
given by 
\begin{equation}
f(x,t)=\frac{1}{t^{\frac{n}{\delta}}}B\left(\frac{x}{t^{\frac{1}{\delta}}}\right),\,\text{ with }B(x)=\begin{cases}
\left(C-k|x|^{\alpha}\right)_{+}^{\frac{\beta-1}{m(\beta-1)-1}} & \text{ for }m\neq\frac{1}{\beta-1}\\
\frac{1}{\sigma}\exp\left(-\frac{|\beta-1|}{\beta^{\alpha}}|x|^{\alpha}\right) & \text{ for }m=\frac{1}{\beta-1}
\end{cases}\label{eq:FundamentalSolution}
\end{equation}
 with
\begin{equation}
\delta=n(\beta-1)m+\beta-n>0,\,\,\,\,\, k=\frac{m(\beta-1)-1}{\beta}\left(\frac{1}{\delta}\right)^{\frac{1}{\beta-1}}\text{ and }\alpha=\frac{\beta}{\beta-1}.\label{eq:NotationsFundamentalSolution}
\end{equation}
 The constants $C$ and $\sigma$ are uniquely determined by mass
conservation, e.g. $\int f(x,t)\text{d}x=1.$ Of course, we observe
that the function $B(x)$ above is analog to the generalized $q$-Gaussian
(\ref{eq:DefQGaussian}). In the literature, $B(x)$ is called the
Barenblatt profile following its identification as a solution for
heat release from a point source \cite{barenblatt_unsteady_1952,pattle_diffusion_1959}.

The doubly nonlinear diffusion equation also enables to derive a nice
extension of the de Bruijn identity (\ref{eq:deBruijnIdentity}).
A remarkable point is that the extended ($\beta,q$)-Fisher information
naturally pops up in this identity, generalizing the role of the standard
Fisher information in the classical de Bruijn identity. This is stated
in the next Proposition. It shall be mentioned that the case $\beta=2$
of this result has been given in a nice paper by Johnson and Vignat
\cite{johnson_results_2007}. 
\begin{prop}
{[}Extended de Bruijn identity{]} Let $f(x,t)$ a probability distributions
satisfying the doubly nonlinear equation (\ref{eq:dnle}). Assume
that the domain $\Omega$ is independent of $t,$ that $f(x,t)$ is
differentiable with respect to $t,$ and that $\frac{\partial}{\partial t}f(x,t)^{q}$
is absolutely integrable and locally integrable with respect to $t$.
Then, for $q=m+1-\frac{\alpha}{\beta}$, $\lambda=n(q-1)+1,$ and
$\delta=\beta\lambda-n(q-1),$ we have
\begin{align}
\frac{\text{d}}{\text{d}t}H_{q}[f] & =\frac{q\, m^{\beta-1}}{M_{q}[f]}\phi_{\beta,q}[f]=\left(\frac{m}{q}\right)^{\beta-1}M_{q}[f]^{\beta-1}\, I_{\beta,q}[f].\label{eq:ExtendedDeBruijn}\\
\frac{\text{d}}{\text{d}t}S_{q}[f] & =q\, m^{\beta-1}\phi_{\beta,q}[f]=\left(\frac{m}{q}\right)^{\beta-1}M_{q}[f]^{\beta}\, I_{\beta,q}[f].\label{eq:ExtendedDeBruijnb}
\end{align}
where $M_{q}[f]=\int f^{q}$ and where $H_{q}[f]=\frac{1}{1-q}\log M_{q}[f]$
is the Rényi entropy and $S_{q}[f]=\frac{1}{1-q}\left(M_{q}[f]-1\right)$
the Tsallis entropy. In terms of the entropy power $N_{q}[f]=M_{q}[f]^{\frac{2}{n}\,\frac{1}{1-q}}$
we also have
\begin{equation}
\frac{\text{d}}{\text{d}t}N_{q}[f]^{\frac{\delta}{2}}=q\frac{\delta}{n}m^{\beta-1}N_{q}[f]^{\frac{\beta\lambda}{2}}\,\phi_{\beta,q}[f]=\frac{\delta}{n}\left(\frac{m}{q}\right)^{\beta-1}N_{q}[f]^{\frac{\beta}{2}}\, I_{\beta,q}[f],\label{eq:ExtendeddeBruijnEntropyPower}
\end{equation}
and
\begin{equation}
\frac{\text{d}}{\text{d}t}N_{q}[f]^{\frac{\delta}{2}}\geq\frac{\text{d}}{\text{d}t}N_{q}[G]^{\frac{\delta}{2}}.\label{eq:derivativeEntropyPower}
\end{equation}

\end{prop}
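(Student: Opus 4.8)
The plan is to derive everything from a single computation --- the time derivative of the entropic moment $M_q[f]=\int_\Omega f(x,t)^q\,\mathrm{d}x$ --- and then translate it into the various entropy functionals by elementary identities, closing with the generalized Stam inequality (\ref{eq:GeneralizedStamInequality}) for the last assertion.

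First I would differentiate under the integral sign --- which the stated hypotheses (the domain $\Omega$ fixed in $t$, $\partial_t f^q$ absolutely and locally integrable) are precisely designed to license --- to get $\frac{\mathrm{d}}{\mathrm{d}t}M_q[f]=q\int_\Omega f^{q-1}\,\partial_t f\,\mathrm{d}x$. Substituting the doubly nonlinear equation $\partial_t f=\mathrm{div}\!\left(|\nabla f^{m}|^{\beta-2}\nabla f^{m}\right)$ and integrating by parts --- the boundary contributions vanishing because $f$ has compact support when $q>1$ and decays fast enough when $q\le1$ (finiteness of $\phi_{\beta,q}[f]$) --- this becomes $\frac{\mathrm{d}}{\mathrm{d}t}M_q[f]=-q(q-1)\int_\Omega f^{q-2}\,|\nabla f^{m}|^{\beta-2}\,\nabla f^{m}\!\cdot\!\nabla f\,\mathrm{d}x$. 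Using $\nabla f^{m}=mf^{m-1}\nabla f$ collapses the integrand to $m^{\beta-1}f^{\,q-2+(m-1)(\beta-1)}|\nabla f|^{\beta}$, and the key arithmetic point is that the exponent $q-2+(m-1)(\beta-1)$ equals $\beta(q-2)+1$: indeed $q=m+1-\alpha/\beta$ together with $\alpha=\beta/(\beta-1)$ gives $m-1=(q-2)+1/(\beta-1)$, so $(m-1)(\beta-1)=(q-2)(\beta-1)+1$. Hence $\frac{\mathrm{d}}{\mathrm{d}t}M_q[f]=-q(q-1)\,m^{\beta-1}\int_\Omega f^{\beta(q-2)+1}|\nabla f|^{\beta}\,\mathrm{d}x=-q(q-1)\,m^{\beta-1}\,\phi_{\beta,q}[f]$.

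From here the Rényi and Tsallis statements are immediate: since $H_q=\frac{1}{1-q}\log M_q$ and $S_q=\frac{1}{1-q}(M_q-1)$, one has $\frac{\mathrm{d}}{\mathrm{d}t}H_q[f]=\frac{-q(q-1)}{1-q}\,\frac{m^{\beta-1}}{M_q[f]}\,\phi_{\beta,q}[f]=\frac{q\,m^{\beta-1}}{M_q[f]}\,\phi_{\beta,q}[f]$ and $\frac{\mathrm{d}}{\mathrm{d}t}S_q[f]=q\,m^{\beta-1}\phi_{\beta,q}[f]$, and the $I_{\beta,q}$ forms follow at once by substituting $\phi_{\beta,q}[f]=(M_q[f]/q)^{\beta}\,I_{\beta,q}[f]$ from the definition. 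For the entropy power I would write $N_q[f]^{\delta/2}=\exp\!\left(\frac{\delta}{n}H_q[f]\right)$, so that $\frac{\mathrm{d}}{\mathrm{d}t}N_q[f]^{\delta/2}=\frac{\delta}{n}N_q[f]^{\delta/2}\,\frac{\mathrm{d}}{\mathrm{d}t}H_q[f]=\frac{\delta}{n}\,q\,m^{\beta-1}\,\frac{N_q[f]^{\delta/2}}{M_q[f]}\,\phi_{\beta,q}[f]$; the exponent bookkeeping $N_q[f]^{\delta/2}/M_q[f]=N_q[f]^{\beta\lambda/2}$ holds exactly because $\delta=\beta\lambda-n(q-1)$ and $M_q[f]=N_q[f]^{\frac{n}{2}(1-q)}$, and one more use of $M_q[f]=N_q[f]^{\frac{n}{2}(1-q)}$ with $\lambda=n(q-1)+1$ converts the result into the $I_{\beta,q}$ form, giving (\ref{eq:ExtendeddeBruijnEntropyPower}).

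Finally, for (\ref{eq:derivativeEntropyPower}) I would simply raise the generalized Stam inequality (\ref{eq:GeneralizedStamInequality}), $\phi_{\beta,q}[f]^{1/\beta}N_q[f]^{\lambda/2}\ge\phi_{\beta,q}[G]^{1/\beta}N_q[G]^{\lambda/2}$, to the power $\beta$ to obtain $\phi_{\beta,q}[f]\,N_q[f]^{\beta\lambda/2}\ge\phi_{\beta,q}[G]\,N_q[G]^{\beta\lambda/2}$ --- a bound which, by the scaling identities (\ref{eq:ScaleEqualities}), is the same for every $q$-Gaussian $G_\gamma$ --- and multiply both sides by the positive constant $q\frac{\delta}{n}m^{\beta-1}$, identifying each side with $\frac{\mathrm{d}}{\mathrm{d}t}N_q[\,\cdot\,]^{\delta/2}$ through (\ref{eq:ExtendeddeBruijnEntropyPower}) (the right-hand member being read as that identity applied to the self-similar Barenblatt solution). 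The genuinely delicate points here are analytic rather than algebraic: justifying the differentiation under the integral sign and the vanishing of the boundary terms in the integration by parts. Once those are granted, the remainder is just exponent bookkeeping that works precisely because the relations $q=m+1-\alpha/\beta$, $\alpha=\beta/(\beta-1)$, $\lambda=n(q-1)+1$ and $\delta=\beta\lambda-n(q-1)$ lock the parameters together.
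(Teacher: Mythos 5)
Your proposal is correct and follows essentially the same route as the paper: differentiate under the integral sign, substitute the doubly nonlinear equation, integrate by parts with vanishing boundary terms, use $\nabla f^{m}=mf^{m-1}\nabla f$ together with $q=m+1-\alpha/\beta$, $\alpha=\beta/(\beta-1)$ to recognize $\phi_{\beta,q}[f]$, and invoke the generalized Stam inequality for the final bound. The only difference is cosmetic --- you compute $\frac{\mathrm{d}}{\mathrm{d}t}M_{q}[f]$ first and then pass to $H_{q}$, $S_{q}$ and $N_{q}^{\delta/2}$, whereas the paper differentiates $H_{q}$ directly --- and your exponent bookkeeping and prefactors all check out.
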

Of course, the classical de Bruijn identity 
\begin{equation}
\frac{\text{d}}{\text{d}t}H[f]=\phi_{2,1}[f]=I_{2,1}[f].
\end{equation}
is recovered from the extended de Bruijn identity (\ref{eq:ExtendedDeBruijn}),
for $\alpha=\beta=2,$ and $q=m=1.$ We recognize in the right side
of (\ref{eq:ExtendeddeBruijnEntropyPower}) the very same products
of entropy power and ($\beta,q$)-Fisher information that appear in
the the generalized Stam inequalities (\ref{eq:GeneralizedStamInequality})
and (\ref{eq:GeneralizedStamInequalityforI}). By these inequalities,
we obtain (\ref{eq:derivativeEntropyPower}) which indicates that
the derivative of the entropy power in minimum for the generalized
$q$-Gaussian, with equality if and only if $f=G_{\gamma}$, $\forall\gamma.$
Furthermore, it is known that the solutions of the doubly nonlinear
equation converge to the Barenblatt profile \cite{vazquez_smoothing_2006}.
Thus, we see that the minimum of the derivative of the entropy power
is always attained asymptotically. In a very recent and nice paper
\cite{savare_concavity_2012}, Savaré and Toscani have shown that
in the case $\beta=2,$ $m=q$, the entropy power $N_{q}[f]^{\frac{\delta}{2}}$
is a concave function of $t.$ A consequence of this result is that
the derivative (\ref{eq:ExtendeddeBruijnEntropyPower}) of entropy
power is non increasing in time, with a minimum for $f=G,$ reached
at least asymptotically. It would be very interesting to get the same
kind of result in the present setting. For now, let us proceed with
the proof of extended de Bruijn identity. 
\begin{proof}
Let us consider the Rényi entropy 
\begin{equation}
H_{q}[f]=\frac{1}{1-q}\log\int_{\Omega}f(x,t)^{q}\text{d}x.
\end{equation}
The regularity assumptions in the statement of the proposition enable
to use Leibnitz' rule and differentiate under the integral sign:
\begin{align}
(1-q)\frac{\partial}{\partial t}H_{q}[f] & =\frac{1}{\int_{\Omega}f(x,t)^{q}\text{d}x}\int_{\Omega}\frac{\partial}{\partial t}f(x,t)^{q}\text{d}x=\frac{q}{\int f(x,t)^{q}\text{d}x}\int_{\Omega}f(x,t)^{q-1}\frac{\partial}{\partial t}f(x,t)\text{d}x\\
= & \frac{q}{\int_{\Omega}f(x,t)^{q}\text{d}x}\int_{\Omega}f(x,t)^{q-1}\Delta_{\beta}f(x,t)^{m}\text{d}x\\
= & \frac{q}{\int_{\Omega}f(x,t)^{q}\text{d}x}\int_{\Omega}f(x,t)^{q-1}\nabla.\left[\left|\nabla f^{m}\right|^{\beta-2}\nabla f^{m}\right]\text{d}x
\end{align}
where we used the fact that $f(x,t)$ satisfies the doubly nonlinear
heat equation (\ref{eq:dnle}). By the divergence theorem and the
product rule, we have
\begin{equation}
\int_{U}\,\nabla.(\psi A)\, dV=\int_{\partial U}\psi A.\eta\, dS=\int_{U}\,\psi\nabla.A\, dV+\int_{U}\, A.\nabla\psi\, dV,
\end{equation}
where $\eta$ is a unit vector orthogonal to the hypersurface $\partial U.$
Thus, considering 
\begin{equation}
\int_{\Omega}f(x,t)^{q-1}\Delta_{\beta}f(x,t)^{m}\text{d}x=\int_{\Omega}f(x,t)^{q-1}\nabla.\left[\left|\nabla f^{m}\right|^{\beta-2}\nabla f^{m}\right]\,\text{d}x,
\end{equation}
we get, with $\psi=f(x,t)^{q-1}$ and $A=\left|\nabla f^{m}\right|^{\beta-2}\nabla f^{m}$,
\begin{multline}
\int_{\Omega}f(x,t)^{q-1}\nabla.\left[\left|\nabla f^{m}\right|^{\beta-2}\nabla f^{m}\right]\,\text{d}x=\int_{\partial\Omega}f(x,t)^{q-1}\left|\nabla f^{m}\right|^{\beta-2}\nabla f^{m}.\eta\,\text{d}x\\
-\int_{\Omega}(q-1)f(x,t)^{q-2}\nabla f.\left[\left|\nabla f^{m}\right|^{\beta-2}\nabla f^{m}\right]\,\text{d}x.
\end{multline}
Assuming that both $f(x,t)$ and $|\nabla f|$ tend to zero on the
boundary $\partial\Omega$ of the domain, it remains
\begin{align}
\int_{\Omega}f(x,t)^{q-1}\nabla.\left[\left|\nabla f^{m}\right|^{\beta-2}\nabla f^{m}\right]\,\text{d}x & =-(q-1)\int_{\Omega}f(x,t)^{q-2}\nabla f.\left[\left|\nabla f^{m}\right|^{\beta-2}\nabla f^{m}\right]\,\text{d}x,\\
= & -(q-1)m^{\beta-1}\int_{\Omega}f(x,t)^{\beta m+q-m-1}\left(\frac{\left|\nabla f\right|}{f}\right)^{\beta}\,\text{d}x,
\end{align}
where we used twice the fact that $\nabla f^{m}=mf^{m-1}\nabla f.$
Finally, if we choose $q=m+1-\frac{\alpha}{\beta},$ the exponent
in the last equation reduces to $\beta(q-1)+1,$ and 
\begin{alignat}{1}
\frac{\text{\ensuremath{\partial}}}{\text{\ensuremath{\partial}}t}H_{q}[f] & =\frac{q\, m^{\beta-1}}{\int f(x)^{q}\text{d}x}\int_{\Omega}f(x,t)^{\beta(q-1)+1}\left(\frac{\left|\nabla f\right|}{f}\right)^{\beta}\,\text{d}x\\
 & =\frac{q\, m^{\beta-1}}{M_{q}[f]}\phi_{\beta,q}[f].
\end{alignat}
Using the relationship (\ref{eq:I_GenFishera}) between $I_{\beta,q}\left[f\right]$
and $\phi_{\beta,q}\left[f\right]$, this identity becomes 
\begin{equation}
\frac{\text{\ensuremath{\partial}}}{\text{\ensuremath{\partial}}t}H_{q}[f]=\left(\frac{m}{q}\right)^{\beta-1}M_{q}[f]^{\beta-1}\, I_{\beta,q}[f].
\end{equation}
Of course, the second inequality (\ref{eq:ExtendedDeBruijnb}) follows
immediately. If we consider the entropy power $N_{q}=M_{q}^{\,\frac{2}{n}\,\frac{1}{1-q}}$
instead of the entropy, we obtain along similar steps the following
relation:
\begin{equation}
\frac{\partial}{\partial t}N_{q}[f]^{\frac{\delta}{2}}=q\frac{\delta}{n}m^{\beta-1}N_{q}[f]^{\frac{\beta\lambda}{2}}\,\phi_{\beta,q}[f]=\frac{\delta}{n}\left(\frac{m}{q}\right)^{\beta-1}N_{q}[f]^{\frac{\beta}{2}}\, I_{\beta,q}[f].
\end{equation}
with $\lambda=n(q-1)+1$ and $\delta=n(\beta-1)(q-1)+\beta=\beta\lambda-n(q-1).$

\end{proof}

\section{Thermodynamics considerations }

Given an information measure or a generalized entropy, it is possible
to investigate whether it is possible to construct a thermodynamics
based on this measure. Several of such constructions have been proposed,
namely the nonextensive thermostatistics based on the Tsallis entropy,
see the books \cite{naudts_generalised_2011,tsallis_introduction_2009},
the construction of thermodynamics based on the Rényi entropy, with
connections with the multifractals \cite{bashkirov_renyi_2006,jizba_world_2004,lenzi_statistical_2000,parvan_renyi_2010},
or based on the Sharma-Mittal entropy \cite{frank_generalized_2002}.
Generalized entropies obtained through Beck and Cohen's superstatistics
have also been considered \cite{beck_superstatistics_2003,beck_generalised_2009,tsallis_constructing_2003}.
Finally, following the pioneering work of Frieden, thermodynamics
based on the Fisher information have been proposed \cite{frieden_fisher-based_1999,frieden_physics_2000,frieden_non-equilibrium_2002,frieden_science_2004,plastino_fisher_2005}. 

Though it is not the objective of the present paper to discuss and
compare the pros and cons of these different thermodynamics, these
prior works naturally lead us to examine the status of some basic
properties of conventional thermodynamics in the case of the generalized
Fisher information. In this spirit, we consider the additivity property
of the information, the mixing property, and finally the Legendre
structure of thermodynamics.

\subsection{Additivity}

In classical thermodynamics, additivity is the principle that if a
global system is constituted of two independent components, then the
total entropy is the sum of entropies of each subsystem. However,
the use of a different form of entropy or complexity measure can lead
to reconsider this property. It is well known, for instance, that
the Tsallis entropy is non additive. Indeed, if we use a measure of
information, it is understandable that the information given by a
whole system can be different of the sum of the information given
by its components ; the difference representing an information gain
attached to the combining of the elementary systems. In the case of
Fisher information, the information is additive for independent systems.
However, this property is not preserved for the generalized Fisher
information as soon as $q\neq1$ or $\beta\neq2$. Nevertheless, we
still have an interesting inequality which bounds the ($\beta,q$)-Fisher
information of a combined system. Indeed, if $X$ and $Y$ are two
independent random vectors with densities $f_{X}(x)$ and $f_{Y}(y)$,
then the generalized Fisher information $\phi_{\beta,q}$ attached
to the joint distribution $f_{X,Y}(x,y)=f_{X}(x)f_{Y}(y)$ is
\begin{equation}
\phi_{\beta,q}[f_{X}f_{Y}]=\int\left(f_{X}(x)f_{Y}(y)\right)^{\beta(q-2)+1}\left|\nabla f_{X}(x)\, f_{Y}(y)+f_{X}(x)\nabla f_{Y}(y)\right|^{\beta}\text{d}x\,\text{d}y.\label{eq:genFisherJoint}
\end{equation}
The Minkovski inequality indicates that the triangle inequality holds
in $L_{p}$ spaces, e.g. for a weighted $L_{p}$ norm 
\begin{equation}
\left(\int w(x)|f(x)+g(x)|^{p}\text{d}x\right)^{\frac{1}{p}}\leq\left(\int w(x)|f(x)|^{p}\text{d}x\right)^{\frac{1}{p}}+\left(\int w(x)|g(x)|^{p}\text{d}x\right)^{\frac{1}{p}},
\end{equation}
with $w(x)>0$ and $p>1.$ Applying this to the generalized Fisher
information (\ref{eq:genFisherJoint}), we immediately obtain 
\begin{equation}
\phi_{\beta,q}[f_{X}f_{Y}]^{\frac{1}{\beta}}\leq M_{\beta(q-1)+1}[f_{Y}]^{\frac{1}{\beta}}\phi_{\beta,q}[f_{X}]^{\frac{1}{\beta}}+M_{\beta(q-1)+1}[f_{X}]^{\frac{1}{\beta}}\phi_{\beta,q}[f_{Y}]^{\frac{1}{\beta}},\label{eq:IneqFisherJoint}
\end{equation}
where $M_{q}[f]$ is the information generating function. By the general
power mean inequality, we know that $M_{a}[f]^{\frac{{1}}{a}}\leq M_{b}[f]^{\frac{1}{b}}$
for $a<b.$ Consequently, when $q<1,$ $M_{\beta(q-1)+1}<M_{1}=1,$
and the inequality (\ref{eq:IneqFisherJoint}) reduces to 
\begin{equation}
\phi_{\beta,q}[f_{X}f_{Y}]^{\frac{1}{\beta}}\leq\phi_{\beta,q}[f_{X}]^{\frac{1}{\beta}}+\phi_{\beta,q}[f_{Y}]^{\frac{1}{\beta}}.
\end{equation}

\subsection{Mixing property}

It is well known that the Boltzmann entropy increases when two or
more different substances are mixed. A simple interpretation is that
the uncertainty on (or the complexity of) the mixed system increases
and consequently its the entropy. This mixing property directly implies
that the entropy is a concave function. If we substitute to the classical
entropy a different form which measures differently the amount of
information in a system, then the mixing property, though thermodynamically
appealing, is not necessarily preserved. This is what happens in the
case of our generalized Fisher information. It is known that the standard
Fisher information is a convex function \cite{cohen_fisher_1968,frieden_fisher-based_1999}. 
Similarly, it has be shown that this is also the case for  $q=1$ and any exponent $\beta>1$  \cite{boekee_extension_1977}. We show here that
this is in fact true for any $q\geq 1$. Consider the integrand in the
definition of the ($\beta,q$)-Fisher information, and denote $x=\left|\nabla f\right|$
and $y=f$. Then, let us define 
\begin{equation}
h(x,y)=x^{\beta}y^{(b-\beta)},
\end{equation}
with $b=\beta(q-1)+1.$ This function will be jointly convex, with
respect to its two arguments, if the associated Hessian is non negative
definite. Accordingly, the integral of $h$ will be a convex functional
of $f.$ Hence, we only have to determine the conditions for the non
negativity of the Hessian. 

By direct computation, we readily obtain that the Hessian matrix for
the function $h(x,y)$ is 
\begin{equation}
H=\begin{bmatrix}\dfrac{\partial^{2}h}{\partial x^{2}} & \dfrac{\partial^{2}h}{\partial x\,\partial y}\\[2.2ex]
\dfrac{\partial^{2}h}{\partial y\,\partial x} & \dfrac{\partial^{2}h}{\partial y^{2}}\\[2.2ex]
\end{bmatrix}={x}^{\beta-2}{y}^{b-\beta-2}\left[\begin{array}{cc}
{y}^{2}\beta\left(\beta-1\right) & x\, y\,\beta\left(b-\beta\right)\\
x\, y\,\beta\left(b-\beta\right) & x^{2}\left(b-\beta\right)\left(b-\beta-1\right)
\end{array}\right].
\end{equation}
The determinant of this matrix is
\begin{equation}
D=-{x}^{2\beta-2}{y}^{2(b-\beta)-2}\,\beta\left(b-\beta\right)\left(\, b-1\right),
\end{equation}
and its trace is 
\begin{equation}
T={x}^{\beta-2}{y}^{b-\beta-2}\left(\beta\,{y}^{2}\left(\beta-1\right)+x^{2}\left(b-\beta\right)\left(b-\beta-1\right)\right).
\end{equation}
Since we know that the determinant and trace of a matrix are respectively
the product and sum of the eigenvalues, we see that the two eigenvalues
are non negative if both the determinant and trace are non negative,
for any $x$ and $y.$ This entails here the conditions
\[
\begin{cases}
\beta\left(b-\beta\right)\left(\, b-1\right) & \leq0,\\
\beta\,\left(\beta-1\right) & \geq0,\\
\left(b-\beta\right)\left(b-\beta-1\right) & \geq0.
\end{cases}
\]
Actually, the first two conditions are sufficient since they imply
the third one. With $\beta>0,$ we finally obtain $\beta\geq b\geq1,$
which, given that $b=\beta(q-1)+1,$ finally gives that $\phi_{\beta,q}[f]$ is a convex functional for
\begin{equation}
2-\frac{1}{\beta}\geq q\geq1.
\end{equation}

Let us define by $\phi_{\beta,q}(m)$ and $I_{\beta,q}(m)$ the ``thermodynamic''
($\beta,q$)-Fisher information considered as functions of an observable
$m=E[\left|X\right|^{\alpha}]$:
\begin{equation}
\phi_{\beta,q}(m)=\inf_{f}\left\{ \phi_{\beta,q}[f]:f\in\mathcal{P}\text{ and }m=E[\left|X\right|^{\alpha}]\right\} 
\end{equation}
The value of the information
attached to the optimum distribution is denoted $\phi_{\beta,q}(m)$
-- the use of the square brackets and parenthesis distinguishes between
the functions of the state and the functions of the observable. 
Though the generalized Fisher information $\phi_{\beta,q}[f]$ is
not a convex function of the probability density function for $q<1,$
we should notice that the ``thermodynamic'' ($\beta,q$)-Fisher information
$\phi_{\beta,q}(m)$ and $I_{\beta,q}(m)$ are in fact convex in $m$ for
any $q>\max\left\{ (n-1)/n,\, n/(n+\alpha)\right\} $. This is actually
an immediate consequence of the Cramér-Rao inequalities (\ref{eq:GeneralizedCramerLutwak})
and (\ref{eq:GeneralizedCramerJFB}). Indeed, any generalized $q$-Gaussian
$G_{\gamma}$ with parameter $\gamma$ reaches the Cramér-Rao bound
and we have, for instance for the first ($\beta,q$)-Fisher information: 
\[
m_{\alpha}\left[G_{\gamma}\right]^{\frac{1}{\alpha}}\,\phi_{\beta,q}\left[G_{\gamma}\right]^{\frac{1}{\beta\lambda}}=m_{\alpha}\left[G\right]^{\frac{1}{\alpha}}\,\phi_{\beta,q}\left[G\right]^{\frac{1}{\beta\lambda}}.
\]
Choosing $\gamma$ such that $m=m_{\alpha}\left[G_{\gamma}\right]$,
we obtain that the ``thermodynamic'' ($\beta,q$)-Fisher information
$\phi_{\beta,q}(m)$ is 
\[
\phi_{\beta,q}(m)=K\, m{}^{-\frac{\beta\lambda}{\alpha}},
\]
with $K=m_{\alpha}\left[G\right]^{\frac{\beta\lambda}{\alpha}}\,\phi_{\beta,q}\left[G\right].$
Hence, since $\alpha,$ $\beta$ and $\lambda$ are positive, $\phi_{\beta,q}(m)$
is a convex function of $m$.

\subsection{Reciprocity relations and Legendre thermodynamics relationships}

In the formulation of standard thermodynamics, the Legendre structure
is an important ingredient. We show here that the standard reciprocity relations
and Legendre structure of thermodynamics are still valid for generalized
Fisher information. The fact that the Legendre structure of thermodynamics
is preserved for general maximum-entropy-like problems and is just
a consequence of Jaynes' maximum entropy principle have already be
shown in \cite{plastino_universality_1997}, and re-derived in the
case of the Fisher information in \cite{frieden_fisher-based_1999,plastino_fisher_2005}.
We extend this to the case of the generalized ($\beta,q$)-Fisher
information. Actually, we provide here a simple and self-contained
derivation of the fact that the thermodynamics-like Legendre structure
is a general and direct consequence of a minimization principle. 

We look for the more general distribution, in the Fisher sense, which
is compatible with a prior information given as a mean value of some
observable. This distribution is selected as the distributions with
minimum ($\beta,q$)-Fisher information, and the related variational
problem is stated as
\begin{equation}
I_{\beta,q}(m)=\inf_{f}\left\{ I_{\beta,q}[f]:\, f\in\mathcal{{P}},\, E[A_{i}(X)]=m_{i},\,\, i=1..M\right\} ,\label{eq:MaxEntPbStandardC}
\end{equation}
where $m=[m_1, m_2, \ldots, m_M]$ now denotes the vector of moments $m_{i}=E[A_{i}(X)]\,\, i=1...M.$
The problem consists in finding a distribution with minimum ($\beta,q$)-Fisher
information on the set of all probability distributions with a series
of fixed moments $m_{i},\,\, i=1...M.$ 

Let $\lambda$ be the vector of the Lagrange multipliers $\lambda_{i}$
associated with the $M+1$ constraints, including the constraint $E\left[A_{0}(x)\right]=E\left[1\right]=m_{0}=1$,
and let us define the dual function $D(\lambda)$ by
\begin{equation}
D(\lambda):=\inf_{f\geq0}\left\{ I_{\beta,q}[f]-\sum_{i=0}^{M}\lambda_{i}\left(\int_{\Omega}A_{i}(x)f(x)\text{d}x-m_{i}\right)\right\} .
\end{equation}

It is well known that the maximum of the dual function is always less
than or equal to the value of the primal problem: $\sup_{\lambda}D(\lambda)\leq I_{\beta,q}(m).$
Clearly, the dual function can be expressed in term of the function
$I_{\beta,q}(\lambda)$ defined by 
\begin{equation}
I_{\beta,q}(\lambda):=\sup_{f\geq0}\left\{ \sum_{i=0}^{M}\lambda_{i}\int_{\Omega}A_{i}(x)f(x)\text{d}x-I_{\beta,q}[f]\right\} ,\label{eq:FirstConjugate}
\end{equation}
 and that gives
\begin{equation}
D(\lambda)=\sum_{i=0}^{M}\lambda_{i}m_{i}-I_{\beta,q}(\lambda).
\end{equation}
If we denote by $\bar{A}$ the vector of moments $\bar{A_{i}}=\int_{X}A_{i}(x)f(x)\text{d}x$,
then among all distributions with the same moments $\bar{A}$, there
exists a distribution with minimum ($\beta,q$)-Fisher information,
say $I_{\beta,q}\bar{(A)}$, and this distribution realizes the supremum
in (\ref{eq:FirstConjugate}) for a given moment. Then it remains
to explore all the possible moments $\bar{A}$ and select the one
which maximizes the right hand side of (\ref{eq:FirstConjugate}). 

In other words, if $S_{\bar{A}}$ is the set of distributions with
given moments $S_{\bar{A}}=\left\{ f\geq0:\,\,\bar{A_{i}}=\int_{\Omega}A_{i}(x)f(x)\text{d}x,\, i=0...M\right\} ,$
again with $A_{0}(x)=1$ and $m_{0}=1,$ then
\begin{alignat}{1}
I_{\beta,q}(\lambda) & =\sup_{\bar{A}}\sup_{f\in S_{\bar{A}}}\left\{ \sum_{i=0}^{M}\lambda_{i}\int_{\Omega}A_{i}(x)f(x)\text{d}x-I_{\beta,q}[f]\right\} ,\label{eq:FisherConj2}\\
 & =\sup_{\bar{A}}\left\{ \sum_{i=0}^{M}\lambda_{i}\,\bar{A_{i}}-I_{\beta,q}(\bar{A})\right\} \label{eq:FisherConj3}
\end{alignat}
where $I_{\beta,q}\bar{(A)}$ denotes the ($\beta,q$)-Fisher information
of the distribution which realizes the supremum in (\ref{eq:FirstConjugate})
for a given moment. These simple relations directly imply the so-called
reciprocity relations. Indeed, the maximum in (\ref{eq:FisherConj3})
is attained for a value $\bar{A}_{\lambda}$ such that 
\begin{equation}
\nabla_{\bar{A}}\, I_{\beta,q}(\bar{A}_{\lambda})=\lambda.\label{eq:DiffvsMoy}
\end{equation}
 On the other hand, (\ref{eq:FisherConj3}) gives
\begin{equation}
I_{\beta,q}(\lambda)=\sum_{i=0}^{M}\lambda_{i}\,\bar{A_{i}}_{\lambda}-I_{\beta,q}(\bar{A}_{\lambda}).\label{eq:Ibqlamb}
\end{equation}
Differentiating $I_{\beta,q}(\bar{A}_{\lambda})$ with respect to
$\lambda$ and taking into account (\ref{eq:DiffvsMoy}), we obtain
\begin{equation}
\nabla_{\lambda}\, I_{\beta,q}(\bar{A}_{\lambda})=\left(\nabla_{\lambda}\otimes\bar{A_{\lambda}}\right).\nabla_{\bar{A}}\, I_{\beta,q}(\bar{A}_{\lambda})=\left(\nabla_{\lambda}\otimes\bar{A_{\lambda}}\right).\lambda,\label{eq:EulerTheorem}
\end{equation}
where $x\otimes y$ denotes the outer product of the columns vectors
$x$ and $y$, with $x\otimes y=xy^{t}.$ The result (\ref{eq:EulerTheorem}) is nothing but
a general form of the Euler theorem. Now, the differentiation of $I_{\beta,q}(\lambda)$
in (\ref{eq:Ibqlamb}) with respect to $\lambda$ gives
\begin{equation}
\nabla_{\lambda}\, I_{\beta,q}(\lambda)=\bar{A_{\lambda}}+\left(\nabla_{\lambda}\otimes\bar{A_{\lambda}}\right).\lambda-\left(\nabla_{\lambda}\otimes\bar{A_{\lambda}}\right).\nabla_{\bar{A}}\, I_{\beta,q}(\bar{A}_{\lambda})
\end{equation}
which, taking into account the Euler formula (\ref{eq:EulerTheorem})
yields the simple relation
\begin{equation}
\nabla_{\lambda}\, I_{\beta,q}(\lambda)=\bar{A_{\lambda}}.\label{eq:Diffvslambda}
\end{equation}

In fact, (\ref{eq:FisherConj3}) expresses the fact that
$I_{\beta,q}(\lambda)$ is the Legendre transform of $I_{\beta,q}(\bar{A}),$
and  relations (\ref{eq:DiffvsMoy}) and (\ref{eq:Diffvslambda})
simply express that $\lambda$ and $\bar{A}$ are the dual variables
linked by the induced Legendre structure. However, let us emphasize
that $I_{\beta,q}(\bar{A})$ is not necessarily the Legendre
transform of $I_{\beta,q}(\lambda)$; this will only be true if $I_{\beta,q}(\bar{A})$
is a convex function of $\bar{A}.$ 

Finally, observe that when we select the value of the Lagrange parameter
$\lambda$ which maximizes the dual function, we obtain
\begin{equation}
\nabla_{\lambda}\, I_{\beta,q}(\lambda)=m.
\end{equation}
This gives the way to compute a value of $\lambda$, and therefore
the associated probability distribution, such that the moment constraints
$E[A_{i}]=m_{i},\,\, i=1...M-1$ are satisfied.

\section{Conclusion}

In this paper, we have presented two extended forms of Fisher information
that fit well in the context of nonextensive thermostatistics. Indeed,
these ($\beta,q$)-Fisher information satisfy several extended properties
that interrelate generalized $q$-Gaussians,  generalized ($\beta,q$)-Fisher
information and $q$-entropies. Among these properties, we have
stressed that the generalized $q$-Gaussians are the solution of variational
problems such as the minimization of the generalized ($\beta,q$)-Fisher
information among distributions with a given $q$-entropy, or such
as the minimization of the generalized ($\beta,q$)-Fisher information
among distributions with a fixed moment. This complements the known
fact that the generalized $q$-Gaussians maximize the $q$-entropies
subject to a moment constraint. These results recover, as a particular
case, known characterizations of the standard Gaussian. We have also
introduced several information functionals minimized by the generalized
$q$-Gaussian. In information theory, the important de Bruijn identity
links the Fisher information and the derivative of the entropy. We
have shown that this identity can be extended to generalized
versions of entropy and Fisher information. More precisely, we have
shown that for all distributions satisfying a nonlinear heat equation,
then the generalized ($\beta,q$)-Fisher information naturally pop
up in the expression of the derivative of the entropy. In a second
step, we have examined further properties of the generalized Fisher
information and of their minimization. In particular, we have considered
the combination of two independent systems and shown that, though
non additive, the generalized Fisher information of the combined system
is upper bounded. In the case of mixing, we have shown that the generalized
Fisher information is convex for $q\geq1.$ Finally, we have shown
that the minimization of the generalized Fisher information subject
to moment constraints leads to a Legendre structure analog to the
Legendre structure of thermodynamics. We shall mention that though
these different results were presented using an Euclidean norm on
$\mathbb{R}^{n},$ they are actually also valid for arbitrary norms,
following the general formulation in \cite{bercher_generalized_2012b}.
This means in particular that one can use a weighted norm such as $\left|x\right|=\sqrt{x^{t}C^{-1}x}$,
where $C$ is a covariance matrix, which in turn leads to the characterization
of elliptical generalized $q$-Gaussians. However, this does not directly account
for matrix (or even higher order) constraints, and this still remains an open question 
in the present setting. 

Further work should also examine whether a principle of minimum generalized
($\beta,q$)-Fisher information, together with the underlying Legendre
structure, could lead to interesting predictions. Whatever the outcome,
it is clear that the two generalized Fisher information appear as
useful companions to the $q$-entropies, as demonstrated eg. by de
Bruijn identity. The interplay between the extended $q$-entropies
and generalized $(\beta,\, q)$-Fisher information could certainly
be of value for the analysis and characterization of complex systems,
e.g. following \cite{pennini_rnyi_1998,vignat_analysis_2003,romera_Fisher-Shannon_2004,dehesa_fisher_2006}.

%

\section{\label{sec:Information-measures-of} Appendix - Information measures
of generalized $q$-Gaussians}

Let $g_{\gamma}(x)=\left(1-s\gamma|x|^{\alpha}\right)_{+}^{\frac{\nu}{s}}$.
The corresponding generalized $q$-Gaussian is obtained as $G_{\gamma}(x)=g_{\gamma}(x)/\int g_{\gamma}(x)\text{d}x,$
with $\nu=1$ and $s=q-1$. In order to express the different information
measures of generalized $q$-Gaussians, we use a closed-form expression
given in the next proposition. 
\begin{prop}
Let $\alpha,\, p>0,$ then consider the following quantity: 
\[
\mu_{p,\nu}=\int|x|^{p}\left(1-s\gamma|x|^{\alpha}\right)_{+}^{\frac{\nu}{s}}\text{d}x
\]
We use the change of variable in polar coordinates $x=r\, u,$ with
$u=x/|x|$ and the representation of the Lebesgue measure $\mathrm{d}x=r^{n-1}\mathrm{d}r\,\mathrm{d}u$.
In this expression $\mathrm{d}u$ denotes the surface element on the
unit sphere and $\int\text{d}u=n\,\omega_{n},$ where $\omega_{n}=\pi^{\frac{n}{2}}/\Gamma(\frac{n}{2}+1)$
is the volume of the $n$-dimensional unit ball. By the integral representations
of the Beta function $B$, one gets the formula 
\begin{alignat}{1}
\mu_{p,\nu} & =\frac{1}{\alpha}\,\left(\gamma\right)^{-\frac{p+n}{\alpha}}n\,\omega_{n}\times\nonumber \\
 & \begin{cases}
(-s)^{-\frac{p+n}{\alpha}}B\left(\frac{p+n}{\alpha},-\frac{\nu}{s}-\frac{p+n}{\alpha}\right) & \text{for }-\frac{\nu\alpha}{\left(p+n\right)}<s<0\\
s^{-\frac{p+n}{\alpha}}B\left(\frac{p+n}{\alpha},\frac{\nu}{s}+1\right) & \text{for }s>0\\
\left(\nu\right)^{-\frac{p+n}{\alpha}}\Gamma\left(\frac{p+n}{\alpha}\right) & \text{if }s=0
\end{cases}\label{eq:general_moment_pnu}
\end{alignat}
 
\end{prop}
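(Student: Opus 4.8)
The plan is to reduce $\mu_{p,\nu}$ to a one-dimensional radial integral and then recognize a Beta (or Gamma) integral after a single monomial change of variable. Since the integrand depends on $x$ only through $|x|$, the polar decomposition $x=r\,u$ recalled in the statement, together with $\int\mathrm{d}u=n\,\omega_{n}$, gives
\[
\mu_{p,\nu}=n\,\omega_{n}\int_{0}^{R}r^{p+n-1}\left(1-s\gamma r^{\alpha}\right)_{+}^{\nu/s}\,\mathrm{d}r,
\]
where $R=(s\gamma)^{-1/\alpha}$ when $s>0$ (the positive part cutting the support at that radius), and $R=\infty$ when $s\le 0$ (for $s<0$ one has $1-s\gamma r^{\alpha}\ge 1$, so the positive part is inactive; the case $s=0$ is read as $\exp(-\nu\gamma r^{\alpha})$). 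So the whole computation is the evaluation of this radial integral in the three regimes for $s$.

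For $s>0$ I would substitute $t=s\gamma r^{\alpha}$, so that $r=(t/(s\gamma))^{1/\alpha}$ and $r^{p+n-1}\,\mathrm{d}r=\frac{1}{\alpha}(s\gamma)^{-(p+n)/\alpha}\,t^{(p+n)/\alpha-1}\,\mathrm{d}t$; as $r$ runs over $[0,R]$, $t$ runs over $[0,1]$, and the radial integral becomes $\frac{1}{\alpha}(s\gamma)^{-(p+n)/\alpha}\int_{0}^{1}t^{(p+n)/\alpha-1}(1-t)^{\nu/s}\,\mathrm{d}t=\frac{1}{\alpha}(s\gamma)^{-(p+n)/\alpha}\,B\!\left(\frac{p+n}{\alpha},\frac{\nu}{s}+1\right)$, the convergence at $t=1$ being exactly $\frac{\nu}{s}+1>0$. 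For $s<0$, writing $|s|=-s$ and substituting $v=|s|\gamma r^{\alpha}$ turns the integral into $\frac{1}{\alpha}(|s|\gamma)^{-(p+n)/\alpha}\int_{0}^{\infty}v^{(p+n)/\alpha-1}(1+v)^{-\nu/|s|}\,\mathrm{d}v$; using the representation $\int_{0}^{\infty}v^{a-1}(1+v)^{-a-b}\,\mathrm{d}v=B(a,b)$ with $a=\frac{p+n}{\alpha}$ and $a+b=-\frac{\nu}{s}$ gives $B\!\left(\frac{p+n}{\alpha},-\frac{\nu}{s}-\frac{p+n}{\alpha}\right)$, the condition $b>0$ being precisely $-\frac{\nu\alpha}{p+n}<s<0$. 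For $s=0$, the substitution $v=\nu\gamma r^{\alpha}$ applied to $\int_{0}^{\infty}r^{p+n-1}e^{-\nu\gamma r^{\alpha}}\,\mathrm{d}r$ yields $\frac{1}{\alpha}(\nu\gamma)^{-(p+n)/\alpha}\,\Gamma\!\left(\frac{p+n}{\alpha}\right)$. Multiplying each case by the prefactor $n\,\omega_{n}$ produces the three branches of (\ref{eq:general_moment_pnu}).

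There is no genuine obstacle here; the only points needing care are getting the convergence conditions on $s$ to match the stated ranges, and remembering to use the $[0,\infty)$ form of the Euler integral (rather than the $[0,1]$ form) in the $s<0$ case. As a sanity check one can note that the three formulas agree in the limit: letting $s\to 0$ in the $s>0$ branch and using $B(a,b)\sim\Gamma(a)\,b^{-a}$ as $b\to\infty$ collapses the prefactor $(s\gamma)^{-(p+n)/\alpha}$ against $(\nu/s+1)^{-(p+n)/\alpha}$ to produce $(\nu\gamma)^{-(p+n)/\alpha}\,\Gamma\!\left(\frac{p+n}{\alpha}\right)$, so the piecewise formula is continuous across $s=0$.
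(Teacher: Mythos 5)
Your proof is correct and takes essentially the same route as the paper: pass to polar coordinates so that $\mu_{p,\nu}$ becomes a one-dimensional radial integral, then use the monomial substitution $t\propto r^{\alpha}$ to recognize the Euler Beta integral on $[0,1]$ for $s>0$, the Beta representation $\int_{0}^{\infty}v^{a-1}(1+v)^{-a-b}\,\mathrm{d}v=B(a,b)$ for $s<0$, and the Gamma integral for $s=0$, with the stated ranges of $s$ appearing precisely as the convergence conditions. The paper only sketches this computation, and your filled-in details (prefactors, exponents, and the $s\to0$ consistency check) all match the stated formula.
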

From this general expression, we immediately identify the partition
function of a generalized $q$-Gaussian $G_{\gamma}$ with parameter
$\gamma$: $Z(\gamma)=\mu_{0,1}$, with $s=q-1$, which yields
\begin{equation}
Z(\gamma)=\frac{1}{\alpha}\left(\gamma\right)^{-\frac{n}{\alpha}}n\,\omega_{n}\times\begin{cases}
(1-q)^{-\frac{n}{\alpha}}B\left(\frac{n}{\alpha},-\frac{1}{q-1}-\frac{n}{\alpha}\right) & \text{for }1-\frac{\alpha}{n}<q<1\\
(q-1)^{-\frac{n}{\alpha}}B\left(\frac{n}{\alpha},\frac{1}{q-1}+1\right) & \text{for }q>1\\
\Gamma\left(\frac{n}{\alpha}\right) & \text{if }q=1.
\end{cases}\label{eq:GenPartitionFunction}
\end{equation}
 Similarly, we obtain the information generating function
\begin{alignat}{1}
M_{q}[G_{\gamma}] & =\int G_{\gamma}(x)^{q}\text{d}x=\frac{\int g_{\gamma}(x)^{q}\text{d}x}{\left(\int g_{\gamma}(x)\text{d}x\right)^{q}}=\frac{\mu_{0,q}}{\left(\mu_{0,1}\right)^{q}},\label{eq:InfoGeneratingFunction}\\
 & =\frac{\alpha q}{n(q-1)+\alpha q}\times\frac{1}{\left(\mu_{0,1}\right)^{q-1}}
\end{alignat}
where the second line is obtained by simplification using the properties
of the Beta functions, namely $B(x,y+1)=\frac{y}{x+y}B(x,y),$ and
$B(x,y-1)=\frac{x+y-1}{y-1}B(x,y)$.

The information generating function, and thus the associated Rényi
and Tsallis entropies are finite for $q>\max{\left\{ n/(n+\alpha),\,1-\alpha/n\right\} }.$

Likewise, the moment of order $p$ is given by 
\begin{equation}
m_{p}[G_{\gamma}]=\frac{\int|x|^{p}g_{\gamma}(x)\text{d}x}{\int g_{\gamma}(x)\text{d}x}=\frac{\mu_{p,1}}{\mu_{0,1}}.
\end{equation}
 If $p=\alpha,$ by the properties of the Beta functions, the expressions
for the moment of order $p$ simplifies into 
\begin{equation}
m_{\alpha}[G_{\gamma}]=\frac{\mu_{\alpha,1}}{\mu_{0,1}}=\frac{n}{\alpha}\frac{1}{\gamma(q-1)\left(\frac{1}{q-1}+\frac{n}{\alpha}+1\right)}\,\text{for }q>n/(n+\alpha)\label{eq:generalized_nu_moment_p=00003D00003D00003Dalpha}
\end{equation}
Let $b=\beta(q-1)+1.$ The generalized ($\beta,q$)-Fisher information
of the generalized Gaussian has the expression 
\begin{alignat}{1}
\phi{}_{\beta,q}[G_{\gamma}] & =\frac{\left(\alpha\gamma\right)^{\beta}}{\left(\mu_{0,1}\right)^{b}}\int|x|^{\alpha}\left(1-(q-1)|x|^{\alpha}\right)_{+}^{\frac{b}{(q-1)}-\beta}\mathrm{d}x\label{eq:general_expression_for_FisherA}
\end{alignat}
which, taking into account that $\beta(\alpha-1)=\alpha$, reduces
easily to 
\begin{equation}
\phi{}_{\beta,q}[G_{\gamma}]=\left(\alpha\gamma\right)^{\beta}\,\frac{\mu_{\alpha,1}}{\mu_{0,1}}\times\frac{1}{\left(\mu_{0,1}\right)^{\beta(q-1)}}.
\end{equation}
 The second generalized ($\beta,q$)-Fisher information is given by
\[
I{}_{\beta,q}[G_{\gamma}]=\frac{\phi{}_{\beta,q}[G_{\gamma}]}{M{}_{q}[G_{\gamma}]^{\beta}}=\left(\alpha\gamma\right)^{\beta}\,\frac{\mu_{\alpha,1}}{\mu_{0,1}}\times\left(\frac{\mu_{0,1}}{\mu_{0,q}}\right)^{\beta}.
\]
Let us finally evaluate the lower bound in the Cramér-Rao inequality
(\ref{eq:GeneralizedCramerJFB}). We have 
\[
I_{\beta,q}\left[G_{\gamma}\right]^{\frac{1}{\beta}}\, m_{\alpha}\left[G_{\gamma}\right]^{\frac{1}{\alpha}}=\alpha\gamma\,\left(\frac{\mu_{0,1}}{\mu_{0,q}}\right)\left(\frac{\mu_{\alpha,1}}{\mu_{0,1}}\right)^{\frac{1}{\beta}}\left(\frac{\mu_{\alpha,1}}{\mu_{0,1}}\right)^{\frac{1}{\alpha}}=\alpha\gamma\,\left(\frac{\mu_{\alpha,1}}{\mu_{0,q}}\right),
\]
since $\alpha^{-1}+\beta^{-1}=1.$ By properties of the Beta functions,
we obtain that the ratio simplifies to $\mu_{\alpha,1}/\mu_{0,q}=n/(\gamma\alpha)$.
Therefore, we get the equality $I_{\beta,q}\left[G_{\gamma}\right]^{\frac{1}{\beta}}\, m_{\alpha}\left[G_{\gamma}\right]^{\frac{1}{\alpha}}=I_{\beta,q}\left[G\right]^{\frac{1}{\beta}}\, m_{\alpha}\left[G\right]^{\frac{1}{\alpha}}=n$,
which is the lower bound in (\ref{eq:GeneralizedCramerJFB}).

\section{References}


\end{document}